\newtheorem{proposition}{Proposition}
\newtheorem{theorem}[proposition]{Theorem}
\newtheorem{lemma}[proposition]{Lemma}
\newtheorem{corollary}[proposition]{Corollary}
\newcommand\argmin{\arg\min}
\newcommand\argmax{\arg\max}
\newcommand{\guess}{\tau}
\newcommand{\eps}{\epsilon}
\newcommand{\OPT}{{\sf OPT}}
\title{Fair Range $k$-center}
\author{Matthew Jones \\
    Khoury College \\
    jones.m@northeastern.edu \\
    \And
    Huy L\^{e} Nguy\~{\^{e}}n \\
    Khoury College \\
    hu.nguyen@northeastern.edu \\
    \And
    Thy Nguyen \\ 
    Khoury College \\
    nguyen.thy2@northeastern.edu \\}
\begin{document}

\maketitle

\begin{abstract}
We study the problem of fairness in k-centers clustering on data with disjoint demographic groups. Specifically, this work proposes a variant of fairness which restricts each group's number of centers with both a lower bound (minority-protection) and an upper bound (restricted-domination), and provides both an offline and one-pass streaming algorithm for the problem. In the special case where the lower bound and the upper bound is the same, our offline algorithm preserves the same time complexity and approximation factor with the previous state-of-the-art. Furthermore, our one-pass streaming algorithm improves on approximation factor, running time and space complexity in this special case compared to previous works. Specifically, the approximation factor of our algorithm is 13 compared to the previous 17-approximation algorithm, and the previous algorithms' time complexities have dependence on the metric space's aspect ratio, which can be arbitrarily large, whereas our algorithm's running time does not depend on the aspect ratio.
\end{abstract}

\section{Introduction}
%Machine learning algorithm are increasingly responsible for automating the decision-making process in high-stake scenarios such as in  healthcare, education and job searching \cite{FairRef}. 
% While machine learning algorithms are designed to optimize classical measure of efficiency and utility, they are vulnerable to biases  and exhibit unfair behavior patterns \cite{tambe2019artificial,cowgill2018bias,datta2015automated, sweeney2013discrimination}. In the context of clustering, where a small subset of a dataset is chosen to be approximately representative of the whole dataset, evidence of this bias is also observable: a Google Images search for "CEO" returns an image set with a significantly higher proportion of men than the proportion of male CEOs \cite{ImageBias}. 
%To address these underlying biases, fair machine learning algorithms are designed to correct the issue by explicitly restricting the ability of algorithms to produce unfair behavior.

Fairness has been studied in a number of clustering formulations.
In this work, we focus on the setting of $k$-center, where the input points come from $m$ disjoint demographic groups and the goal is to select $k$ representatives so as to minimize the $k$-center objective. In an ideal situation, the proportion of points from a demographic group would be nearly equal in the entire dataset and the representative set (the $k$ centers). $k$-center clustering has been studied with this equality-based fairness \cite{FairkCCfDS,nguyen2020fair,CKR20}, where the most efficient algorithm is a $3$-approximation in $O(nk)$ time \cite{nguyen2020fair}. These approaches take in the required numbers of centers $m_i$ from each demographic group $i$ as input, and output $k$ centers with the exact demographic constraints. The algorithms are based on polynomial time algorithms for matroid centers due to \cite{chen2016matroid} where the centers are required to form an independent set in a given matroid, which is a uniform matroid in the classic $k$ centers and a generalized partition matroid in the fairness setting. In the streaming setting, \cite{kale2019small} uses techniques of \cite{guha2009tight} and \cite{chen2016matroid} to give $(17+\epsilon)$-approximation one-pass and $(3+\epsilon)$-approximation two-pass algorithms with running time $O_{\epsilon}((nk + k^{3.5}) + k^2 \log (\Lambda))$, where $k$ is the rank of the matroid, $\Lambda$ is the aspect ratio of the metric, and $\epsilon$ terms are hidden by the $O_{\epsilon}$ notation. 

Although equality-fairness is ideal for maintaining fairness with respect to all demographic groups, it is still an open question of how one can efficiently make a trade-off between fairness and utility in the fair $k$-center problem. One approach is to slightly adjust the demographic constraints $m_i$'s to include more points from groups that help optimizing the $k$-center distance. For a fixed budget of $k$ centers, this also implies reduction of the number of required centers from other demographic groups. Modifying the constraints $m_i$'s  would allow one to still utilize the approaches of \cite{FairkCCfDS,nguyen2020fair,CKR20} while trading fairness for improved utility. As we will show in the experimental section, this heuristic, though seemingly intuitive, is ineffective; it is non-trivial to efficiently determine an allocation of $k$ centers from the  demographic groups that would benefit the objective.  

The limitation on controlling the fairness-utility trade-off of the previous approaches motivates our work on a novel fairness constraint where the number of representatives from each demographic group $i$ is restricted not to an exact number but a range from at least $l_i = \frac{\alpha_i \left\lvert  S_i \right\rvert}{n} k $ centers to at most $u_i= \frac{ \beta_i \left\lvert  S_i \right\rvert}{n}k$ centers, for $\alpha \in (0,1]$ and $\beta \in [1,\frac nk]$ . Within the context of the fairness-utility trade-off, a smaller value of $\alpha_i$ (larger value of $\beta_i$) allows the demographic make-up of the $k$ centers to deviate from the fairest clustering and obtain improved utility. In practice, a canonical example of the range-based fairness constraint is the well-known four-fifths rule which states that “a selection rate for any race, sex, or ethnic group which is less than four-fifths (or $80\%$) of the rate for the group with the highest rate will generally be regarded by the Federal enforcement agencies as evidence of adverse impact” ~\cite{eeoc78}. 

Formally, the range-based fair $k$ centers problem takes as input the value $k$, the dataset $S$ of size $n$, the bounds $l_i$ and $u_i$ for each demographic group $i$, and a metric $d$. We use $S_i$ to denote the subset of $S$ with demographic group $i$. The goal of the problem is to find the subset $C$ of $S$ of size $k$ that satisfies the range-based fairness constraint and minimizes the $k$-center objective. Formally, we want to find
\begin{align*}
    {\argmin}_{\substack{C = \{c_1,c_2, ...,c_k\} \subseteq S\\\forall i\in [m] : l_i \le |C\cap S_i| \le u_i}} \hspace{0.05cm} \max_{s \in S} \min_{c \in C} d(s,c).
\end{align*}
The bounds on $\{\alpha_i\}_{i \in m}, \{\beta_i\}_{i \in m}$  imply
 $\sum_{i=1}^m l_i \le k \le \sum_{i=1}^m u_i$ and  $\forall i\in [m] : 0 \le l_i \le u_i \le |S_i|$, so a feasible solution must exist.

Despite its better alignment with practice, the new formulation poses significant technical challenge since it is not a matroid constraint. It is in fact not even a down-closed constraint (a more general constraint also previously studied for clustering) because a subset of a feasible set needs not be feasible due to the lower bounds. Thus, existing algorithms for matroid centers in  \cite{kale2019small, chen2016matroid} do not apply to this formulation and a new approach is needed.

\subsection{Our Contributions}

In this work, we present new algorithms for range-based fairness in both offline and streaming settings with constant approximations and running time $O(nk)$. Note that this running time is the same as the time to compute the distance between all points and a given set of $k$ centers and hence, it is essentially as good as possible.

In the classical setting, our algorithm finds a $3$-approximate solution in $O(nk)$ time. The approximation factor and time complexity match the best known bound in \cite{nguyen2020fair} for the special case with $l_i=u_i~\forall i$, and it is the first solution for the new formulation. Empirically, our experiments show that the algorithm's running time and performance are also competitive with previous algorithms designed for equality-based fairness, which is a special case of range-based fairness.

In the streaming setting, our one-pass algorithm finds a $(13+\eps)$-approximate solution in space $O_{\epsilon}\left(km\right)$ and total time $O_{\epsilon}(nk)$ (or equivalently $O_\epsilon(k)$ amortized time per input point). Despite being more general, our algorithm has better approximation factor, running time and space complexity than previous results~\cite{kale2019small} for equality-based constraint ($l_i=u_i~\forall i$).

\subsection{Related Works}

A different formulation of fair clustering where equality is sought for the members of each cluster as opposed to the centers was introduced to $k$-center clustering in \cite{chierichetti2017fair} for 2 demographic groups, with a 4-approximation algorithm. This idea was generalized to multiple demographic groups in \cite{rosner2018privacy}. The range-based generalization of \cite{chierichetti2017fair} is defined in \cite{Bercea2019OnTC} with a 5-approximation algorithm. The work of \cite{bera2019fair} also uses range-based fairness, giving a $(\rho + 2)$-approximation algorithm for a fair $\ell_p$-norm clustering problem which has an unfair $\rho$-approximation algorithm and also allowing overlapping demographic groups, with scaling additive error. A similar definition of fairness is also used for $k$-means in \cite{schmidt2020}, who also present a streaming algorithm for fair $k$-means, and  \cite{kleindessner2019guarantees} considers this fairness within the spectral clustering framework.

Another line of work on fair clustering is \cite{celis2018fair}. Their paper studies the problem of fair summarization with equality-based fairness, but their objective is instead to maximize the diversity score of the selected points. A different fairness definition based on diversity is in \cite{withDiversity2010}, where each cluster is individually required to be sufficiently diverse. Another examination of fairness in \cite{chenProportionally} does not rely on demographic groups, but instead concerns itself with cluster size (in assigned points) by allowing sets of points with size $n/k$ the option to form a cluster. This is similar but distinct from capacitated clustering, where the number of points assigned to any cluster center is bounded above. This clustering has a 6-approximation algorithm for capacitated $k$-center \cite{Khuller2000TheCK} and a $(7+\epsilon)$-approximation FPT algorithm for capacitated $k$-medians \cite{Adamczyk2018ConstantFF}. As a more general form of clustering which could potentially be used to model a fairness constraint, \cite{Chakrabarty2018GeneralizedCP} presents an algorithm for clustering with outliers which chooses the set of centers as an item in a down-closed family $\mathcal{F}$ of subsets of $S$, where the contents of $\mathcal{F}$ limits the set of centers chosen in the clustering. Down-closed constraint is a generalization of a matroid constraint but it does not capture a ranged-based notion.

\section{Preliminaries}
\label{sec:prelim}

\subsection{Gonzalez's 2-approximation for $k$-center}
\label{sec:gonz}

The (unfair) $k$-center problem is known to be NP-hard to solve or approximate within a factor less than 2. Furthermore, there exists an algorithm which takes $O(nk)$ time and yields a 2-approximation for the $k$-center problem \cite{Gonzalez}. 

This algorithm, commonly referred to as the Gonzalez algorithm, involves selecting the first center arbitrarily and then iteratively choosing the remaining $k-1$ centers. Each center is chosen as the point in $S$ which is farthest from all previously chosen centers, i.e. the $i$th chosen center would be the $s \in S$ which was responsible for the $k$-center objective value on the first $i-1$ chosen centers. Let $a_1,...,a_k$ denote the sequence centers returned by an instance of the Gonzalez algorithm in the order which they were chosen. Also, let $d_i$ for $i \in \{2,...,k\}$ be the minimum distance between $a_i$ and the set $\{a_1,...,a_{i-1}\}$. An important and well-known lemma is:

\begin{lemma}
The sequence $\left(d_2,d_3,...,d_k\right)$ is non-increasing.\label{seqdec}
\end{lemma}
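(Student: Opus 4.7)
The plan is to prove the inequality $d_{i+1} \le d_i$ for every $i \in \{2, \ldots, k-1\}$ directly from the greedy selection rule of the Gonzalez algorithm. The key identities are that for each $j \ge 2$,
\[
d_j = \min_{\ell < j} d(a_j, a_\ell), \quad \text{and} \quad a_j \in \argmax_{s \in S} \min_{\ell < j} d(s, a_\ell),
\]
so $d_j$ equals both the distance from $a_j$ to the previously chosen set and the maximum over all points in $S$ of that same minimum distance.

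First, I would fix $i \ge 2$ and observe that $a_{i+1}$ belongs to $S$, so by the maximality defining $a_i$ we have
\[
\min_{\ell < i} d(a_{i+1}, a_\ell) \le \min_{\ell < i} d(a_i, a_\ell) = d_i.
\]
Next, I would use that adding another point to the set over which a minimum is taken can only decrease the minimum, giving
\[
d_{i+1} = \min_{\ell \le i} d(a_{i+1}, a_\ell) \le \min_{\ell < i} d(a_{i+1}, a_\ell).
\]
Chaining the two inequalities yields $d_{i+1} \le d_i$, which is exactly the claim.

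There is really no obstacle here: the argument uses nothing beyond the definition of the greedy rule and the monotonicity of the minimum under enlarging the index set, and does not require the triangle inequality or any other metric property. The only subtle point to state carefully is that $a_i$ is chosen over \emph{all} of $S$ (including points that may later become $a_{i+1}$), which is what licenses comparing $\min_{\ell < i} d(a_{i+1}, a_\ell)$ to $d_i$. I would present the argument as a short direct calculation followed by the two-line chain above, with no case analysis or induction needed.
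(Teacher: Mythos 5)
Your argument is correct and is exactly the standard proof of this fact: the chain $d_{i+1} = \min_{\ell \le i} d(a_{i+1}, a_\ell) \le \min_{\ell < i} d(a_{i+1}, a_\ell) \le \max_{s \in S}\min_{\ell < i} d(s, a_\ell) = d_i$ uses only the greedy maximality of $a_i$ and monotonicity of the minimum, as you say. The paper states this lemma as well-known and gives no proof, so there is nothing to diverge from; your write-up supplies precisely the canonical argument being invoked.
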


%\begin{proof}
%\begin{align*}
%d_{i+1} &= d(a_{i+1}, \{a_j\}_{j<i+1})\\
%&= \min\{d(a_{i+1},a_i), d(a_{i+1}, \{a_j\}_{j<i})\}\\
%&\le d(a_{i+1}, \{a_j\}_{j<i})\\
%&\le d(a_i, \{a_j\}_{j<i})\\
%&= d_i
%\end{align*}
%The last inequality holds because $a_i$ was chosen as the $i$th item in the sequence, so it must be farther, or at least no closer, than any point in $S$ to $\{a_j\}_{j<i}$. The whole chain of inequalities yields to $d_{i+1} \le d_i$, proving the lemma inductively.
%\end{proof}

\subsection{Fair k-center Under Equality-Based Fairness}

We briefly review a previous algorithm for $k$-center under the equality-based fairness constraint, which serves as a base for the algorithms in this paper. This algorithm comes from \cite{nguyen2020fair}, and is an improvement on \cite{chen2016matroid}.

The algorithm in \cite{nguyen2020fair} for $k$-center under the equality-based fairness can be divided into four larger steps. First, compute a sequence of unfair centers using the Gonzalez algorithm. Second, compute the largest prefix length $h$ of the sequence such that the first $h$ centers can be substituted by (or "shifted" to) a set of nearby points in $S$ that satisfy the fairness constraint. The points in $S$ are assigned to their nearest centers among the $h$ centers and we are only allowed to substitute a center with a point assigned to it. Third, find the minimum-distance such shift on that prefix. Finally, pick the remaining $(k - h)$ centers arbitrarily to obtain $k$ centers which satisfy fairness.

The "fair-shift problem" is as follows: given a set $S$, a set of centers $A \subseteq S$, a distance value $d'$, and a metric $d:S\times S \to \mathbb{R}$, find an injective mapping $g:A\rightarrow S$ such that
\[\forall a \in A, d(g(a),a) < d'\] and one can choose the remaining $(k-|A|)$ centers arbitrarily up to demographic group such that the set of size $k$ including the arbitrary centers and $g(A)$ satisfies the fairness constraint. A set of points $A$ satisfies the "fair-shift constraint" if the mapping $g$ exists or, equivalently, the set $g(A)$ exists. Solving this problem is sufficient for solving the second and third major steps in the above algorithm, by varying the set $A$ and value $d'$. By the design of the algorithm, {\em we only need to solve the fair-shift problem when $d'$ is sufficiently small that balls of radius $d'$ around $A$ are disjoint, so each $s\in S$ is a substitution candidate for at most one center in $A$}. 

For the equality-based fairness constraint, solving the fair-shift problem is accomplished via maximum matching solved as a maximum-flow problem. The setup matches vertices representing the points in $A$ to vertices representing demographic group centers under the fairness constraint, where each demographic group can match with as many vertices in $A$ as fairness allows. A vertex representing $a \in A$ can match with a vertex representing the demographic group $i$ if and only if $\exists s \in S_i\text{ : } d(a,s) < d'$, where $s = a$ is allowed when $a \in S_i$. Thus, a matching of size $|A|$ exists iff there is a set $g(A)$ which solves the earlier problem. The proof of this fact, as well as the $O(nk)$ time complexity, is seen in Section 5.2 of \cite{nguyen2020fair}.

Our offline algorithm for range-based fair $k$-center in this paper is outlined in algorithm \ref{alg:3approx}, and follows the same general structure as the algorithm in \cite{nguyen2020fair}. Since the fair-shift constraint inherently allows Step 4 to be possible, the significant step to develop algorithm \ref{alg:3approx} lies in formalizing and solving the fair-shift problem for range-based fairness.

\begin{algorithm}[t]
\caption{Outline of the 3-approximation algorithm for $k$-center with range-based fairness}
\label{alg:3approx}
\DontPrintSemicolon
\SetAlgoLined
\LinesNumbered
\KwIn{a set of points $S = \{s_1,...,s_n\}$ each with a demographic group value $f_j$, a distance metric $d$, the values $l_i$ and $u_i$ for each demographic group, a value $k$}
\KwOut{A set $C$ such that $C \subseteq S$ and $l_i \le \lvert\{C \cap S_i\}\rvert \le u_i$ for all demographic group values $f$}
\BlankLine
Compute a sequence of $k$ (unfair) centers using the Gonzalez algorithm. \;

Find the largest integer $h$ such that the first $h$ items in the sequence satisfy the fair-shift constraint with $d'=d_h/2$.\;

Find the set of points which can substitute for the first $h$ items in the sequence to fulfill the fair-shift constraint such that the maximum distance between an item and its substitution is minimized.\;

Choose the remaining centers arbitrarily such that the fairness constraint is satisfied.\;

Return the $k$ centers.\;
\end{algorithm}

\section{$k$-centers for Range-Based Fairness}
\label{sec:offline}
To accomplish this, we need to elaborate on the fairness requirements on the set $g(A)$ in the adapted fair-shift problem. First, we remove the lower-bound fairness constraint on the set $g(A)$, since we could have $|A| < \sum_i l_i$. Second, we need to address the property in which the remaining $(k-|A|)$ centers can be filled arbitrarily. This guarantee is implicit in the equality-based case, since it follows from fairness as an upper bound. This is not generally true with range-based fairness case, since the set $g(A)$ may satisfy the upper bounds $u_i$ for all demographic groups but also have too few points in some demographic groups to satisfy all $l_i$ by assigning the remaining $k - |A|$ centers. With this note, the requirements on $g:A\rightarrow S$ for the range-based fair-shift problem are
\begin{itemize}
    \item $\forall a \in A, d(g(a),a) < d'$
    \item $\forall \text{ demographic groups $i$, }|S_i \cap g(A)| \le u_i$
    \item $\sum_i \max\{0,l_i - |S_i \cap g(A)|\} \le k - |A|$
\end{itemize}
where the first requirement limits the shift distance, the second enforces the $u_i$ constraints, and the third requirement allows us to enforce $l_i$ in the final $k$-centers by adding points.

\subsection{Testing for Fair-Shift}

\subsubsection{Designing the Algorithm}

We again use matching via maximum flow to solve the problem. Two significant changes are required to adapt to the new fairness constraint, specifically targeting the lower bound constraints. First, we add a new vertex that represents the $k - |A|$ centers which are not assigned by $g(A)$. This vertex connects to every demographic group node, since these arbitrary centers can be assigned to any group. For the second change, we enforce the bounds $(l_i, u_i)$ for each demographic group by adding flow constraints from the vertex representing group $i$ to the sink $t$. This initial construction $G_A$ can be seen in Figure \ref{fig:withlb}. We would compute maximum flow from $s$ to $t$, with a flow of $k$ representing a successful fair shift. The vertex set $V_A$ represents $A$, $V_f$ is the vertex set for demographic groups, and $v_c$ is responsible for the $k-|A|$ arbitrary centers.

\begin{figure}[ht]
    \centering
    \includegraphics[width=0.32\textwidth]{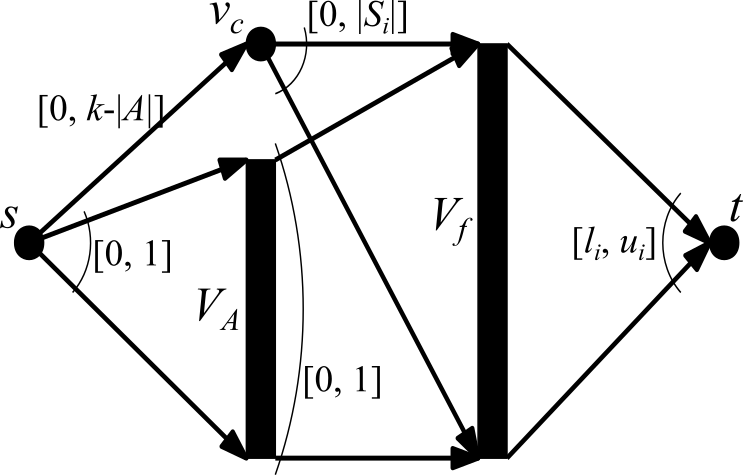}
    \caption{The original construction $G_A$. Edge capacities are given as [lower, upper]. Individual bounds with index $i$ correspond to the incident demographic-group node in $V_f$.}
    \label{fig:withlb}
\end{figure}

\begin{figure}[ht]
    \centering
    \includegraphics[width=0.37\textwidth]{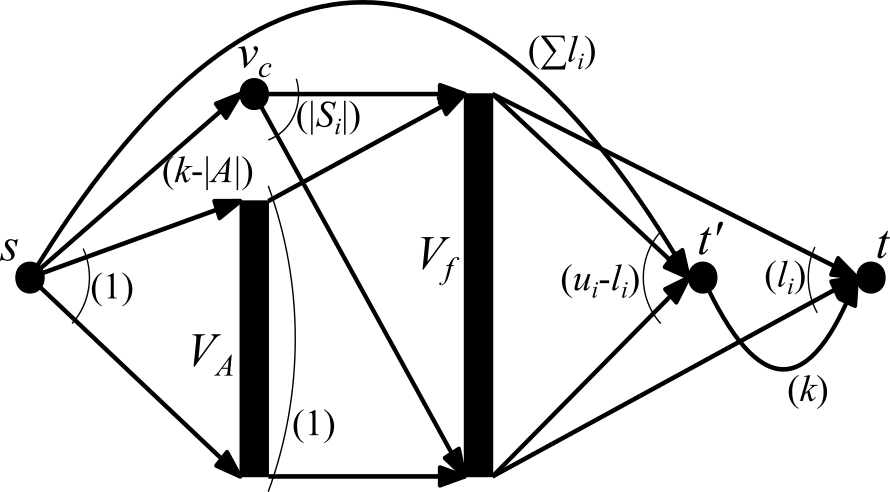}
    \caption{The construction $G'_A$ reducing $G_A$ to a problem without lower bounds. Upper-bound edge capacities are given as (capacity) to differentiate from vertex and vertex set labels. Individual bounds with index $i$ correspond to the incident demographic-group node in $V_f$.}
    \label{fig:withoutlb}
\end{figure}

%\begin{figure}[ht]
%  \def\svgwidth{\columnwidth}
%  \includesvg{extension_both}
%  \caption{The original construction $G$ with lower bounds (left) and the construction $G'$ without lower bounds (right). Edge capacities are given as [lower, upper]. Edges which are added in transition from $G$ to $G'$ are thicker and blue, edges with decreased capacity are dashed and red.}
%\label{fig:lowerboundconstruction}
%\end{figure}

% CHANGE
% In order to use Dinic's flow algorithm to solve this problem, we need to transform this into a flow problem without lower-bound constraints on edges. This transformation is from Section 7.7 of \cite{algorithmDesign}. First, add a new source and sink, which are incident to the old source and sink with edges whose capacity equals the previous minimum $s-t$ cut. Then, for each edge $e$ with a positive lower bound $l_i$, subtract $l_i$ from both the upper and lower bounds on $e$ and add edges from the new source to $e$'s destination and from $e$'s source to the new sink, each with lower bound capacity 0 and upper bound capacity $l_i$. Note that in our transformation we do not add a new source since this does not affect the flow value or the corresponding matching, and we also merge parallel edges and their capacities from $s$ to $t'$. The result of the transformation on our flow graph, which we call $G'_A$, is on the right of Figure \ref{fig:withoutlb}. We show this is a correct reduction:

In order to use Dinic's flow algorithm, we need to transform this into a flow problem without lower-bound constraints on edges. This transformation is from Section 7.7 of \cite{algorithmDesign}, and the resulting construction is in Figure \ref{fig:withoutlb}.

\begin{lemma}
    \label{lem:transformation_equality}
    The flow graph $G_A$ yields a valid $s$-$t$ flow of value $k$ iff the transformed flow graph $G'_A$ yields a valid $s$-$t$ flow of value $k + \sum_i l_i$.
\end{lemma}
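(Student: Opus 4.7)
The plan is to instantiate the standard lower-bound-elimination construction of Kleinberg--Tardos (Section 7.7 of \cite{algorithmDesign}) on $G_A$ and verify the two directions directly. The only edges of $G_A$ carrying a nonzero lower bound are the $v_i \to t$ edges from each demographic-group node $v_i \in V_f$ to the sink, with bounds $[l_i, u_i]$. After the transformation each such edge has capacity $u_i - l_i$ and no lower bound, while auxiliary ``forcing'' edges are added to route the mandatory $l_i$ units of flow through the new source/sink pair which are labeled $s$ and $t$ in $G'_A$ (Figure \ref{fig:withoutlb}). All other edges of $G_A$ are carried over unchanged since they already have zero lower bound.

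For the forward direction, given a valid flow $f$ of value $k$ in $G_A$, I would define $f'(e) = f(e) - l_i$ on each $v_i \to t$ edge and $f'(e) = f(e)$ on every other original edge, and saturate every auxiliary edge introduced by the reduction. Three things then need to be checked: (i) capacity constraints on the reduced edges hold, since $0 \le f(e) - l_i \le u_i - l_i$ by feasibility of $f$; (ii) flow conservation at every internal vertex continues to hold, because the per-vertex deficits/surpluses created by subtracting $l_i$ from selected edges are exactly the $L_v$ quantities compensated by the saturated auxiliary edges, namely $L_{v_i}=-l_i$ at each group node and $L_t = \sum_i l_i$ at the old sink; and (iii) the value picks up exactly $\sum_i l_i$ because each auxiliary edge out of the new source contributes its full capacity $l_i$ on top of the original value $k$.

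For the reverse direction, I would begin with any valid $s$-$t$ flow $f'$ in $G'_A$ of value $k + \sum_i l_i$ and first argue that every auxiliary edge leaving the new source must be saturated: the aggregate capacity of these edges is exactly $\sum_i l_i$ and the only way for the remaining $k$ units to be routed is through the reduced copies of the original edges, so any slack on an auxiliary edge would strictly lower the value. Given saturation, defining $f(e) = f'(e) + l_i$ on the $v_i \to t$ edges and $f(e) = f'(e)$ elsewhere restores both the lower bound $l_i \le f(e)$ and the upper bound $f(e) \le u_i$ on those edges, and flow conservation in $G_A$ follows from that in $G'_A$ by ``folding'' the saturated auxiliary edges back into the lower bounds they represent. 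The value of the resulting flow at the original source in $G_A$ is then $k$.

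The main obstacle, rather than any conceptual subtlety, is the bookkeeping: one must correctly compute $L_v = \sum_{e \text{ in}} l_e - \sum_{e \text{ out}} l_e$ at every vertex, identify which auxiliary edges are incident to each $v_i$, and verify that the conservation equations in $G_A$ and $G'_A$ differ at each vertex by exactly that $L_v$ quantity. Because the $L_v$'s are nonzero only at the $v_i$ nodes (with $L_{v_i}=-l_i$) and at the old sink $t$ (with $L_t = \sum_i l_i$), the verification reduces to a line of arithmetic per vertex type, and the argument reads as a direct transcription of the generic reduction with these values substituted in.
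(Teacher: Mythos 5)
Your proposal is correct and takes essentially the same route as the paper's own (one-sentence) proof, which likewise invokes the Kleinberg--Tardos Section 7.7 lower-bound elimination together with the observation that the cuts around $s$ and around $t$ in $G'_A$ each have capacity exactly $k+\sum_i l_i$, so a maximum flow of that value forces saturation of all auxiliary edges and the folding argument in both directions goes through as you describe. The only nit is a labeling slip: in $G'_A$ the reduced copy of the original $[l_i,u_i]$ edge is $v_i\to t'$ (capacity $u_i-l_i$), while $v_i\to t$ (capacity $l_i$) is the auxiliary forcing edge, so your forward-direction assignment $f'(e)=f(e)-l_i$ should be placed on the former, with the latter saturated.
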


The proof of this lemma follows from the fact that $(s, V\setminus s)$ is a minimum cut and the proofs in Section 7.7. of \cite{algorithmDesign}.

% CHANGE
%\begin{proof}
%    Suppose there exists a flow satisfying all constraints with flow value $k$ in $G_A$. For each edge without a lower bound in $G_A$, put the same flow through the same edge in $G'_A$. For each edge with a lower bound in $G_A$, the same edge in $G'_A$ gets the same flow minus the lower bound, and the edges added in the transformation for that edge get flow equal to the lower bound. Each node individually preserves conservation of flow, and each edge satisfies its upper bound in $G'_A$ either by the corresponding requirement in $G_A$, by seeing the same decrease in flow and upper-bound capacity from $G_A$ to $G'_A$, or by setting flow equal to capacity. The flow is increased by $l_i$ for each edge with a lower bound. Since the augmenting paths corresponding to these increases are edge-disjoint, the increases themselves are also disjoint, and therefore the constructed flow in $G'_A$ has value $k + \sum_i l_i$ and is a valid $s$-$t$ flow.
    
%    Instead suppose that the transformed graph $G'_A$ yields a flow of value $k + \sum_i l_i$ which satisfies all edge capacity constraints. Since the cuts $(s, V \setminus s)$ and $(t, V \setminus t)$ in $G'_A$ have value $k + \sum_i l_i$, these are minimum cuts, and all edges along these cuts are saturated in the flow. Therefore, since the edges added by the transformation are all saturated, we can reverse the flow transformation from $G_A$ to $G'_A$ given in the first half of the proof, and we obtain a valid flow with value $k$ in $G_A$ which satisfies all edge constraints, including lower bounds.
%\end{proof}

To solve the fair-shift problem, we construct the graph $G'_A$ and solve for a maximum $s$-$t$ flow. That is, the vertices $s, t, t', v_c$, the vertex sets $V_A$ and $V_f$ are added, and all edges not between $V_A$ and $V_f$ are added. For each vertex $a \in A$ and demographic group $i$, add an edge from the vertex for $a$ in $V_A$ to the vertex for $i$ in $V_f$ iff there exists an item $s \in S$ such that $s$ is in demographic group $i$ and $d(a,s) < d'$. 

There are three important notes about the construction. First, we will never actually build the graph $G_A$ in our $k$-centers algorithm but will instead directly build and test $G'_A$. Second, we build $G'_A$ from scratch in algorithm \ref{alg:newFairShiftTest}, but we will instead modify $G'_A$ incrementally during the binary searches in algorithm \ref{alg:3approx}, to save computation time. 
Third, because we only care about sufficiently small $d'$ as stated in section \ref{sec:prelim}, it suffices to compute the edges between $V_A$ and $V_f$ in $O(n)$ time by looking at each $s \in S$ and considering only its distance to the closest point in $A$.
We present algorithm \ref{alg:newFairShiftTest} to solve the fair-shift problem for range-based fairness.

\begin{algorithm}[t]
\caption{Testing the new fair shift constraint}\label{alg:newFairShiftTest}
\DontPrintSemicolon
\SetAlgoLined
\LinesNumbered
\KwIn{a set of points $S$, a set of points $A \subseteq S$, a value $d'$, a value $k$, teh bounds $l_i$ and $u_i$}
\KwOut{A fair set $B$ such that $|A| \ge |B|$ and $\forall a\in A, d(a,B) \le d'$, OR the empty set}
\BlankLine
Create a directed graph $G'_A = \{V,E\}$ with initialized values $V = \{s,v_C,t',t\}$, $E = \{(s,v_C,\text{cap.}=k-|A|), (t',t,\text{cap.}=k), (s,t',\text{cap.}=\sum_{i=1}^m l_i)\}$\;

Create a set of vertices $V_A$ with size $|A|$ which maps one-to-one with points in $A$\;

Create a set of vertices $V_f$ with size $m$ which maps one-to-one with demographic groups\;

Update $V$ with value $V \sqcup V_A \sqcup V_f$\;

\For{\text{all demographic group values} $f$}{
    Update $E$ with an edge from the corresponding vertex for $f$ in $V_f$ to $t$ with capacity $l_f$.\;
    
    Update $E$ with an edge from the corresponding vertex for $f$ in $V_f$ to $t'$ with capacity $u_f - l_f$.\;
    
    \For{$i = 1$ \KwTo $|A|$}{
        \If{$\min_{s \in S_i}d(s,a_i) < d'$}{
        Update $E$ with an edge from the corresponding value for $a_i$ in $V_A$ to the corresponding vertex for $f$ in $V_f$ with capacity 1 and label $s$
        }
    }
}

Find the maximum (integer) flow in $G'_A$ using Dinic's algorithm

\eIf{maximum flow value = $k + \sum_{i=1}^m l_i$}{
    Set $B = \emptyset$
    
    \For{$v_i \in V_A$ and $v_f \in V_F$ such that $(v_i, v_f) \in E$ has flow}{
        Update $B = B \cup \text{(the label of $(v_i, v_f)$)}$
    }
    return $B$
}{
    return $\emptyset$
}
\end{algorithm}

\subsubsection{Correctness and Runtime}

\begin{lemma}
    Suppose $d'$ is small enough so that balls of radius $d'$ around $A$ are disjoint.
    If there exists a function $g$ as specified for the range-based fair-shift problem on inputs $S$, $A$, $d'$, $k$, $\{l_i\}_{i\in[m]}$, $\{u_i\}_{i\in[m]}$ then algorithm \ref{alg:newFairShiftTest} returns the set $B = g(A)$ for some such function $g$. If no such function $g$ exists, then algorithm \ref{alg:newFairShiftTest} returns the empty set.
    \label{lem:fairshift_correctness}
\end{lemma}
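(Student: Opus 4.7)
The plan is to prove both directions by reducing to a flow-feasibility statement in $G_A$ via Lemma \ref{lem:transformation_equality}, then translating feasible integer flows in $G_A$ back and forth with valid mappings $g$. Since Dinic's algorithm returns an integer max flow on integer capacities, I can restrict attention to integer flows throughout.

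First I would set up the correspondence. A feasible integer $s$-$t$ flow of value $k$ in $G_A$ sends exactly one unit through each $a \in V_A$ (capacity from $s$ to each $a$ is $1$ and the total $|A|$ units must go somewhere), and $k-|A|$ units through $v_c$. For each $a \in V_A$ carrying flow, that unit exits via exactly one edge $(a, v_f)$ for some demographic group $i$; this edge has a distance-based label $s \in S_i$ with $d(a,s) < d'$. The disjoint-balls hypothesis is critical here: each candidate $s \in S$ is within $d'$ of at most one $a \in A$, so distinct $V_A$-to-$V_f$ edges necessarily carry distinct labels. Consequently, reading off the labels of the flow-carrying edges yields a well-defined injective map $g:A\to S$.

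For the forward direction, suppose a valid $g$ exists. I would construct an integer flow in $G_A$ as follows. For each $a \in A$, let $i$ be the demographic group of $g(a)$; send one unit along $s \to a \to v_f^{(i)}$ using the edge with label $g(a)$, which exists because $d(a,g(a)) < d'$. Then route the remaining $k - |A|$ units out of $v_c$ to the demographic-group nodes so as to pad each group $i$ up to its lower bound $l_i$; the third fair-shift requirement $\sum_i \max\{0, l_i - |S_i \cap g(A)|\} \le k - |A|$ is exactly what guarantees this is possible with $k - |A|$ units. The second requirement $|S_i \cap g(A)| \le u_i$ guarantees the $V_A$-to-$V_f$ portion does not overload any group node, and the padding respects the $[l_i, u_i]$ edge into $t$. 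Lemma \ref{lem:transformation_equality} then gives a flow of value $k + \sum_i l_i$ in $G'_A$, which Dinic's algorithm finds, and the label extraction recovers exactly this $g$ (or another valid one if the max flow picks different edges).

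For the converse, suppose the algorithm returns $B$ rather than $\emptyset$, meaning the max flow in $G'_A$ equals $k + \sum_i l_i$. By Lemma \ref{lem:transformation_equality}, $G_A$ admits a feasible integer $s$-$t$ flow of value $k$. Define $g(a)$ to be the label of the unique $V_A$-to-$V_f$ edge carrying flow out of $a$; this is well-defined and injective by the disjointness argument above, so $B = g(A)$ has size $|A|$. Then $d(a, g(a)) < d'$ holds by construction of the edge set; the $[l_i, u_i]$ capacity on the edge from $v_f^{(i)}$ to $t$ in $G_A$ combined with flow conservation shows that $|S_i \cap g(A)| \le u_i$ and that the $k - |A|$ units through $v_c$ suffice to bring each group up to $l_i$, verifying the third requirement. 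Thus $g$ satisfies all three fair-shift conditions.

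I expect the main obstacle to be the injectivity argument in the converse: without the disjoint-balls hypothesis, two distinct $a, a' \in V_A$ could both have edges to the same group $i$ carrying the same label $s$, breaking injectivity of $g$. Invoking the hypothesis cleanly, and being explicit that each $s \in S$ is a candidate for at most one $a$, is the linchpin of the argument. The other fiddly point is bookkeeping between the three requirements and the flow-conservation equations at $V_f$ in $G_A$ (flow in from $V_A \cup \{v_c\}$ equals flow out to $t$, with the latter constrained to $[l_i, u_i]$), which I would spell out carefully to show the third requirement is equivalent to feasibility of the $v_c$-padding.
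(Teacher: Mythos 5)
Your proposal is correct and follows essentially the same route as the paper's proof: both directions are handled by translating between valid mappings $g$ and integer $s$-$t$ flows of value $k$ in $G_A$, then invoking Lemma \ref{lem:transformation_equality} to pass to $G'_A$. You are in fact somewhat more explicit than the paper on two points it glosses over --- the injectivity of $g$ via the disjoint-balls hypothesis, and the fact that Dinic's may select a max flow (hence a returned $B$) corresponding to a valid $g$ other than the one assumed to exist --- but these are refinements of, not departures from, the paper's argument.
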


\begin{proof}
    Note that this can be phrased as an if-and-only-if: the set $B = g(A)$ for some $g$ is returned if and only if some function $g$ exists. We prove it as such.
    
    Suppose algorithm 2 returns a set $B$. The matching for $B$ has one edge for each $a \in A$. That edge, and specifically its label, corresponds to some other point in $S$ such that $d(a,s) \le d'$. So, the points in $B$ matched to the incident vertex in $A$ yields the mapping $g$. The requirement that $|S_i \cap g(A)| \le u_i$ for all $i$, is met by constraints on edges from $V_f$ to $t$. The third requirement on $g$ is enforced by $v_c$: the flows from $v_c$ to $V_f$ yields an assignment for the arbitrary centers which will satisfy lower bounds. Therefore, a valid mapping $g$ exists, and the returned set is $B = g(A)$.
    
    Conversely, suppose the proper mapping $g$ exists. For each $a\in A$, the corresponding $g(a)$ satisfies $d(a,g(a)) \le d'$. Therefore, the edge from $a$'s representative in $V_A$ to the representative of $g(a)$'s demographic group in $V_f$ is present in $G'_A$, for all $a \in A$. Therefore, we obtain a flow of value $|A|$ in the corresponding $G_A$ by adding the flow through these edges along the $s-t$ paths following $s \rightarrow V_A \rightarrow V_f \rightarrow t' \rightarrow t$ for each $a \in A$. Additionally, by the second and third requirements of $g$, we can add flow along $s-t$ paths following $s \rightarrow v_c \rightarrow V_f \rightarrow t' \rightarrow t$ such that we satisfy all lower bounds, violate no upper bounds, and achieve total flow value $k$. By lemma \ref{lem:transformation_equality}, this means we can achieve the necessary flow in $G'_A$ to return a non-empty set $B$, and the set $B$ would be built by edges given by $g(A)$, since we designed a flow where all these edges have non-zero flow.
\end{proof}

    In runtime analysis, we ignore the time required to construct the graph $G'_A$, since this is handled specially in our $k$-centers algorithm. We specifically address the runtime of the fair-shift algorithm in the context of the $k$-centers algorithm. This implies that $|A| \le k$, $|V_f| = m \le k$, and there are at most $n$ edges from $V_A$ to $V_f$.

\begin{lemma}
    The time complexity of algorithm \ref{alg:newFairShiftTest} in the context of algorithm \ref{alg:3approx} is $O(n\sqrt{k} + T(n,k))$, where $T(n,k)$ is the time required to construct the graph $G'_A$.
    \label{lem:nsqrtk}
\end{lemma}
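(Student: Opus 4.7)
The plan is to execute three steps: (i) count the vertices and edges of $G'_A$ as constructed inside Algorithm \ref{alg:3approx}; (ii) bound the maximum $s$-$t$ flow value; and (iii) run Dinic's algorithm and analyze it with a Hopcroft--Karp-style phase argument.

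For step (i), since $|A|\le k$, $|V_f|=m\le k$, and only four auxiliary vertices ($s,t,t',v_c$) are added, $|V(G'_A)|=O(k)$. For edges, the critical observation (already noted in the excerpt) is that once $d'$ is small enough that balls of radius $d'$ around points of $A$ are disjoint, each $s\in S$ lies within distance $d'$ of at most one $a\in A$. Hence the inner loop of lines 8--10 of Algorithm \ref{alg:newFairShiftTest} contributes at most one unit edge from $V_A$ to $V_f$ per $s\in S$, for a total of at most $n$ such edges; all remaining edges (source-side, sink-side, and edges touching $v_c$ or $t'$) number $O(k)$. Hence $|E(G'_A)|=O(n+k)=O(n)$ using $n\ge k$. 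For step (ii), the cut $(\{s\},V\setminus\{s\})$ has capacity at most $|A|+(k-|A|)+\sum_i l_i=k+\sum_i l_i\le 2k$, so the max flow is $O(k)$.

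For step (iii), we run Dinic's algorithm. The network $G'_A$ is essentially a bipartite matching instance between $V_A$ and $V_f$ through unit-capacity edges, with a constant number of high-capacity auxiliary edges all incident to the four distinguished nodes $\{s,t,t',v_c\}$. A blocking flow on each BFS-layered residual graph can be constructed in $O(|E|)=O(n)$ time via standard DFS, and the Hopcroft--Karp phase count of $O(\sqrt{|V|})=O(\sqrt{k})$ applies because any augmenting path contains at most $O(1)$ non-unit edges (each of the four special nodes appears on any simple path at most once), so the remaining $\Omega(\sqrt{|V|})$ edges of a shortest augmenting path must be unit-capacity internal edges of the bipartite portion and the standard argument (the length of the shortest augmenting path grows to $\Omega(\sqrt{|V|})$ after $O(\sqrt{|V|})$ phases) carries over. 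The main technical obstacle is precisely this adaptation of the Hopcroft--Karp phase analysis to a setting with a handful of non-unit capacities; once established, multiplying $O(\sqrt{k})$ phases by $O(n)$ work per phase and adding the construction time $T(n,k)$ yields the claimed $O(n\sqrt{k}+T(n,k))$ bound.
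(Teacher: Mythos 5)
Your proposal follows essentially the same route as the paper's proof: bound $|V|=O(k)$ and $|E|=O(n)$, bound the maximum flow by $k+\sum_i l_i\le 2k$, and run Dinic's algorithm with a Hopcroft--Karp-style $O(\sqrt{k})$ phase bound justified by the observation that every non-unit-capacity edge is incident to one of the four special vertices $s,t,t',v_c$ and hence appears $O(1)$ times on any augmenting path. The one place you are too quick is the assertion that a blocking flow on each layered graph is computed in $O(|E|)=O(n)$ time ``via standard DFS'': that bound is standard only for unit-capacity networks, and here it requires the same structural observation you use for the phase count --- each augmenting path crosses $O(1)$ high-capacity edges and there are $O(k)$ augmenting paths in total, so the non-deleting traversals of high-capacity edges cost only $O(k)$ over the whole run (this is exactly what the paper's Advance/Retreat/Augment accounting establishes). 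Since you already have that ingredient in hand, this is an omission in the write-up rather than a flaw in the approach.
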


\begin{proof}
    To prove our claim, we need to show the time complexities of Dinic's algorithm on $G'_A$ and computing $B$ are $O(n\sqrt{k})$.
    To construct $B$, we look at all edges between $V_A$ and $V_f$. These edges are obtained from $S$, so there are at most $O(n)$ such edges. We do $O(1)$ work to check each $s\in S$, hence we construct $B$ is $O(n)$ time.
    
    % CHANGE
    %We analyze Dinic's algorithm on our graph as a special case, similar to the special case for matching. The noteworthy difference between our analysis and the special matching analysis of Dinic's is that augmenting paths in the same level-graph in this case do not have to be vertex- or edge-disjoint, since some edges have non-unit capacity in $G'_A$.
    
    Before we continue, we add some notation and asymptotic values. Let $E_{hi}$ be the set of edges in $G'_A$ with non-unit capacity, and let $E_{lo}$ be the set of unit-capacity edges. We have $O(m)$ vertices in $V_f$ and $O(k)$ vertices in $V_A$, as well as 4 other vertices, hence $|V| = O(k)$. In $E_{lo}$, we have $O(k)$ edges from $s$ to $V_A$ and $O(|S|) = O(n)$ edges from $V_A$ to $V_f$, so $|E_{lo}| = O(n)$. $E_{hi}$ has $O(m)$ edges from $G_A$ and an additional $O(m)$ edges from the construction of $G'_A$, hence $|E| = |E_{lo}| + |E_{hi}| = O(n)$.
    
    Recall that Dinic's algorithm proceeds in iterations: in each iteration, BFS is used to compute a layered graph and then DFS is used to find a maximal set of augmenting paths on the layered graph.
    The time to compute the layered graph per iteration is $O(|E|+|V|) = O(n)$, since it uses ordinary BFS. The DFS to compute the maximal set of shortest augmenting paths doesn't force vertex-disjoint maximal augmenting paths. In order to account for this, our DFS is be split into three main operations:
    \begin{itemize}
        \item Advance - Traverse any edge with remaining capacity.
        \item Retreat - When we cannot advance from a node other than $t$, move back over the last-traversed edge and delete it from this layer.
        \item Augment - When we reach $t$, we have an augmenting $s$-$t$ path. Traverse back along the path, decreasing the capacity of each edge by the minimum available capacity on the path, Delete edges from the layer which have 0 capacity.
    \end{itemize}
    The asymptotic total cost of the DFS is the number of times any of these operations are performed on a single edge, since a linear relationship exists between node operations and edge operations. Also, any time we Advance, we will always either Retreat back over the edge or that edge will be part of an Augment, hence the asymptotic time complexity of the DFS is bounded the number of times all edges are part of a Retreat or Augment operation.
    
    We have at most $|E| = O(n)$ Retreats or Augments on unit-capacity edges in any iteration. Notice that all of the edges in $E_{hi}$ are incident to at least one of 4 vertices: $s$, $v_C$, $t'$, and $t$. These vertices are in at most 4 layers of the layered graph, so we can only cross edges with non-unit capacities when we are entering or exiting these layers, hence $O(1)$-many times per augmenting path. Since each augmenting path increases the flow in $G'$ by at least 1, we can upper bound the total number of augmenting paths found during Dinic's on $G'_A$ by the maximum flow, which is $k + \sum_{i=1}^m l_i \le 2k = O(k)$. Therefore, the cost of Augments on high-capacity edges $O(1) \cdot O(k) = O(k)$ for all DFS's in the entirety of Dinic's algorithm, so the total time complexity of the BFS and DFS in a single iteration of the algorithm is $O(n) + O(k) = O(n)$.
    
    The remainder of the proof is very similar to the special case of Dinic's algorithm on bipartite-matching flow graphs. After $\sqrt{k}$ iterations in Dinic's algorithm, the (capacitated) symmetric difference between the current flow and the maximum flow can be decomposed into alternating cycles and paths which are vertex-disjoint except at $s$, $t$, and the endpoints of non-unit capacity edges. Since each augmenting path has length at least $\Omega(\sqrt{k})$ vertices at future iterations and uses at most $O(1)$ high-capacity edges, each augmenting path has $\Omega(\sqrt{k})$ unique vertices, and therefore there can be at most $O(k) / \Omega(\sqrt{k}) = O(\sqrt{k})$ remaining augmenting paths. Since each iteration increases the flow by at least 1, there can be only $O(\sqrt{k})$ remaining levels, and therefore $O(\sqrt{k})$ levels total.
    
    Therefore, we run BFS and DFS in $O(n)$ time $O(\sqrt{k})$-many times, for a total time complexity of $O(n\sqrt{k})$ to test for a fair shift.
\end{proof}

\subsection{Putting it Together}

% CHANGE
% Algorithm \ref{alg:3approx} is the outline for the fair $k$-centers problem, as discussed earlier. First, we apply the Gonzalez algorithm to select $k$ preliminary centers. Then, we binary search to find the maximum value $h \in \{1,...,k\}$ such that the first $h$ preliminary centers have a fair shift with low cost, and binary search to minimize the distance of the shift on those centers. Finally, we fill in the remaining centers arbitrarily to satisfy fairness. We first prove that the algorithm is a 3-approximation. As only the fair-shift problem and corresponding algorithm changed with respect to \cite{nguyen2020fair}, treating algorithm \ref{alg:newFairShiftTest} as a black-box with respect to the outline yields a proof which is incredibly similar to that of the algorithm in \cite{nguyen2020fair}. We use the notation $d_i$ and $a_i$ from Section  \ref{sec:gonz} to talk about the output of Step 1 in algorithm \ref{alg:3approx} here.

Algorithm \ref{alg:3approx} is the outline for the fair $k$-centers problem, as discussed earlier. In lines 2 and 3 we binary search to find the value $h$ and binary search to minimize the distance of the fair shift on the $h$ unfair centers. As only the fair-shift problem and corresponding test algorithm changed with respect to \cite{nguyen2020fair}, treating algorithm \ref{alg:newFairShiftTest} as a black-box with respect to the outline yields proofs which are incredibly similar to \cite{nguyen2020fair}. We use the notation $d_i$ and $a_i$ from Section \ref{sec:gonz} to talk about the output of Step 1 in algorithm \ref{alg:3approx} here.

%Therefore, as the number of sequence items we are considering increases, two things occur which limit the ability of the prefix with that size to pass the fair shift constraint. First, we add another item from the sequence which must be matched in order to shift centers and maintain fairness. Second, for every sequence item $a$ we already need to match, the number of points which are "near" $a$ is non-increasing, since $d_i / 2$ is non-increasing with $i$ by lemma \ref{seqdec}), so $a$ has no new points it can shift to, and possibly has less such points. In these ways, our matching becomes more restricted as the number of sequence items we are considering increases.

%Let the optimal solution of a specific $k$-centers problem with fairness require balls of radius $r^*$ around the centers to cover $S$. On the same problem, we give the following lemma and prove it directly:

First, the following lemma allows us to bound $d_{h+1}$:

\begin{lemma}\cite[Lemma 5.2]{nguyen2020fair}
$d_i > 2r^* \implies $ the set $\{a_j\}_{j \le i}$ satisfies the range-based fair-shift constraint with $d'=d_i/2$\label{fairshift}.
\end{lemma}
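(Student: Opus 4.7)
The plan is to reduce the claim to the standard Gonzalez separation argument and then verify the three requirements of the range-based fair-shift definition by transporting $A$ onto an optimal fair solution.

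First I would set up notation. Let $C^*$ denote an optimal range-based fair solution with radius $r^*$, let $h=i$ (to avoid clashing with the demographic index), and let $A=\{a_1,\ldots,a_h\}$. By Lemma~\ref{seqdec}, the sequence $(d_2,\ldots,d_h)$ is non-increasing, so for any $j<j'\le h$ we have $d(a_j,a_{j'})\ge d_{j'}\ge d_h>2r^*$. Now define the candidate shift $g\colon A\to S$ by letting $g(a_j)$ be the closest point in $C^*$ to $a_j$ (ties broken arbitrarily); by optimality of $C^*$, $d(a_j,g(a_j))\le r^*$.

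Next I would check injectivity of $g$, which is where the Gonzalez separation is used: if $g(a_j)=g(a_{j'})=c$ with $j\ne j'$, then by the triangle inequality $d(a_j,a_{j'})\le 2r^*$, contradicting the bound above. Therefore $g(A)\subseteq C^*$ is a set of size $h$. I would then verify the three bullets of the range-based fair-shift constraint stated in Section~\ref{sec:offline}. The first bullet is immediate: $d(g(a_j),a_j)\le r^*<d_h/2=d'$. The second bullet follows from $g(A)\subseteq C^*$, since for every demographic group $i$, $|S_i\cap g(A)|\le|S_i\cap C^*|\le u_i$ by feasibility of $C^*$.

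The main obstacle I expect is the third bullet, the lower-bound slack $\sum_i\max\{0,l_i-|S_i\cap g(A)|\}\le k-|A|$; this is what distinguishes the range-based case from the equality case. I would prove it by an accounting argument. Write $y_i=|S_i\cap C^*|$ and $z_i=|S_i\cap g(A)|$, so $\sum_i y_i=k$, $\sum_i z_i=h$, $z_i\le y_i$, and $y_i\ge l_i$. Using the identity $\max\{0,l_i-z_i\}=(l_i-z_i)+\max\{0,z_i-l_i\}$ and summing,
\begin{align*}
\sum_i \max\{0,l_i-z_i\}
&= \sum_i l_i - h + \sum_i \max\{0,z_i-l_i\} \\
&\le \sum_i l_i - h + \sum_i (y_i-l_i) \\
&= k-h,
\end{align*}
where the inequality uses $z_i\le y_i$ together with $y_i\ge l_i$ so that $\max\{0,z_i-l_i\}\le y_i-l_i$. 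This yields exactly the required slack, so $g$ witnesses the range-based fair-shift constraint for $\{a_j\}_{j\le h}$ at $d'=d_h/2$, completing the proof.
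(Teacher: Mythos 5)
Your proof is correct and follows the same route the paper (via Lemma 5.2 of the cited work) relies on: map each $a_j$ to its nearest optimal center, use the Gonzalez separation $d(a_j,a_{j'})\ge d_h>2r^*$ from Lemma~\ref{seqdec} to get injectivity, and inherit the upper bounds from feasibility of $C^*$. The one place the paper is silent --- it merely asserts that ``the property that $g(A)$ can be expanded arbitrarily to a fair set of $k$ centers is maintained'' --- is exactly your third bullet, and your accounting argument $\sum_i \max\{0,l_i-z_i\} = \sum_i l_i - h + \sum_i\max\{0,z_i-l_i\} \le \sum_i y_i - h = k-h$ verifies it correctly, so your write-up is in fact slightly more complete than the paper's.
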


From there, we can show that the set found by shifting $\{a_j\}_{j \le h}$ gives us a 3-approximation:

\begin{lemma}\cite[Lemma 5.3]{nguyen2020fair}
Let $C_h$ be the set of points returned by Step 3 of algorithm \ref{alg:3approx}. Then, any set $C$ such that
\begin{itemize}
    \item $C$ satisfies the fairness constraint
    \item $|C| = k$
    \item $C_h \subseteq C$
\end{itemize}
is a 3-approximation for the $k$-centers problem with range-based fairness.\label{sub3approx}
\end{lemma}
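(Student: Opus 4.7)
My plan is to show $\max_{s\in S}\min_{c\in C} d(s,c)\le 3r^*$, where $r^*$ denotes the optimal range-based fair $k$-center cost. Since $C_h\subseteq C$, it suffices to bound $d(s,C_h)$ for every $s\in S$. I would reduce this to two inequalities: an upper bound $d_{h+1}\le 2r^*$, and an upper bound $d^*\le r^*$ on the realized shift distance $d^*=\max_{i\le h}d(a_i,c_{g(a_i)})$ produced in Step 3. Granted both, the triangle inequality finishes.

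First I would establish $d_{h+1}\le 2r^*$ by contraposition. If $d_{h+1}>2r^*$ then Lemma \ref{fairshift} applied at index $h+1$ would show that $\{a_j\}_{j\le h+1}$ also satisfies the range-based fair-shift constraint with $d'=d_{h+1}/2$, contradicting the maximality of $h$ in Step 2 of Algorithm \ref{alg:3approx}.

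Second I would establish $d^*\le r^*$ via a short case analysis. If $d_h\le 2r^*$, then because the algorithm already certifies that the fair-shift constraint succeeds with $d'=d_h/2$, we get $d^*\le d_h/2\le r^*$. Otherwise $d_h>2r^*$, and I would exhibit an explicit feasible fair-shift of cost $\le r^*$ from an optimal fair solution $C^*$. Let $\pi(a_i)=\argmin_{c\in C^*}d(a_i,c)$. Since the pairwise distances within $\{a_1,\ldots,a_h\}$ exceed $2r^*$, the map $\pi$ is injective on this set: if $\pi(a_i)=\pi(a_j)$ then $d(a_i,a_j)\le 2r^*$ by triangle inequality, a contradiction. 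Each shift satisfies $d(a_i,\pi(a_i))\le r^*$ by optimality of $C^*$. The three range-based fair-shift requirements on $\pi(\{a_1,\ldots,a_h\})$ hold because $\pi(\{a_1,\ldots,a_h\})\subseteq C^*$ (giving the $u_i$ upper bounds) and because the remaining $k-h$ centers of $C^*\setminus\pi(\{a_1,\ldots,a_h\})$ absorb any residual lower-bound shortfall (using $|S_i\cap C^*|\ge l_i$ for every $i$).

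Finally I would combine these via the triangle inequality. Fix $s\in S$; Gonzalez's rule (that $a_{h+1}$ is the farthest point from $\{a_1,\ldots,a_h\}$) guarantees $d(s,\{a_1,\ldots,a_h\})\le d_{h+1}$, so pick the nearest $a_i$ and its shift $c\in C_h\subseteq C$. Then $d(s,c)\le d(s,a_i)+d(a_i,c)\le d_{h+1}+d^*\le 2r^*+r^*=3r^*$. The main obstacle is the case $d_h>2r^*$: verifying the residual lower-bound requirement of the range-based fair-shift, which is the qualitatively new feature absent in the equality-based treatment of \cite{nguyen2020fair}. A secondary subtlety is the strict form $d(g(a),a)<d'$ in the fair-shift definition, which introduces at most an infinitesimal discretization slack in $d^*\le r^*$; this is absorbed into the $3r^*$ bound since the shift distance actually attained by $C_h$ is a concrete element of the finite pairwise-distance set.
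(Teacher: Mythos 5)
Your argument is correct and matches the proof the paper relies on (it defers to Lemma 5.3 of \cite{nguyen2020fair}, noting only that the proof carries over because $g(A)$ can still be extended arbitrarily to a fair set of $k$ centers): the decomposition into $d_{h+1}\le 2r^*$ via the contrapositive of Lemma \ref{fairshift} plus a shift distance $\le r^*$ via the case analysis on $d_h$ is exactly the standard route. Your explicit check that $\pi(\{a_1,\ldots,a_h\})\subseteq C^*$ satisfies the upper bounds and that $C^*\setminus\pi(\{a_1,\ldots,a_h\})$ absorbs the lower-bound shortfall is precisely the ``new feature'' the paper asserts is maintained, so nothing is missing.
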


Since the arbitrary centers in Step 4 of algorithm \ref{alg:3approx} with the set $g(A)$ yield $k$ centers which satisfy fairness, lemma \ref{sub3approx} immediately yields the following theorem:
\begin{theorem}
\label{thm:3approx}
Algorithm \ref{alg:3approx} gives a 3-approximation for the problem of $k$-centers with fairness.\label{3approx}\\
\end{theorem}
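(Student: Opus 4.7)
The plan is to verify that Algorithm \ref{alg:3approx} always terminates with a set $C$ of size exactly $k$ containing $C_h$ and satisfying the range-based fairness constraint, and then apply Lemma \ref{sub3approx} directly. The structure is essentially a bookkeeping argument riding on the two preceding lemmas.

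First, I would argue that Step 2 is well-defined, i.e. that there is at least one prefix length satisfying the fair-shift constraint. The singleton prefix $\{a_1\}$ is always a valid candidate, since mapping $a_1$ to itself (or to a close point in its demographic group) leaves $k-1$ arbitrary centers to absorb all the lower bounds $l_i$, which by assumption sum to at most $k$. Hence $h \ge 1$ and Step 3 returns a nonempty set $C_h = g(\{a_1,\dots,a_h\})$ produced by Algorithm \ref{alg:newFairShiftTest}.

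Second, I would verify that Step 4 can always be completed into a fair set of size $k$. By Lemma \ref{lem:fairshift_correctness}, the set $C_h$ returned in Step 3 satisfies $|S_i \cap C_h| \le u_i$ for every $i$ and
\[
\sum_{i} \max\bigl\{0,\,l_i - |S_i \cap C_h|\bigr\} \;\le\; k - |C_h|.
\]
I therefore fill out $C$ by first adding, for each demographic group $i$, any $\max\{0, l_i - |S_i \cap C_h|\}$ points of $S_i \setminus C_h$ (feasible because $l_i \le |S_i|$). This increases the count from $|C_h|$ to some value $|C_h| + \Delta \le k$, simultaneously reaching every lower bound. The remaining $k - |C_h| - \Delta$ slots are then filled with arbitrary points of $S$ not yet chosen, respecting each upper bound $u_i$; this is possible because $\sum_i u_i \ge k$ by assumption and no $u_i$ has yet been exceeded. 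The resulting $C$ has $|C|=k$, contains $C_h$, and satisfies $l_i \le |C\cap S_i|\le u_i$ for all $i$.

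Finally, Lemma \ref{sub3approx} is applied verbatim: since $C$ meets all three of its hypotheses, $C$ is a $3$-approximation for range-based fair $k$-centers, proving the theorem. The main obstacle is the Step 4 feasibility check above, and this is precisely why the third bullet in the fair-shift requirements ($\sum_i \max\{0, l_i - |S_i\cap g(A)|\}\le k - |A|$) was built into Algorithm \ref{alg:newFairShiftTest}; without that ingredient, the set $C_h$ might block any fair completion. Everything else is essentially inherited from the structure of the Gonzalez-based outline and from \cite{nguyen2020fair}.
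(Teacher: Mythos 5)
Your proof is correct and takes essentially the same route as the paper: the paper's own argument is the one-line observation that Step 4 extends $g(A)$ to a fair set of $k$ centers (which is exactly what the third fair-shift requirement guarantees), after which Lemma \ref{sub3approx} applies; you simply spell out the completion bookkeeping that the paper leaves implicit. (Your side remark that $h\ge 1$ is not actually needed -- the empty prefix trivially satisfies the fair-shift constraint since $\sum_i l_i\le k$, and Lemma \ref{sub3approx} works for any $h\ge 0$.)
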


% CHANGE
%We do not include the proofs of lemmas \ref{fairshift} and \ref{sub3approx} because the only requirement not immediately implied by the similar structure of the algorithms is that the set $g(A)$ can be expanded arbitrarily to a fair set of $k$ centers. Since we ensured this with the third condition in the fair-shift problem, the proofs of these lemmas are identical to their proofs in \cite{nguyen2020fair}.

The proofs of lemmas \ref{fairshift} and \ref{sub3approx} follow immediately from lemmas in \cite{nguyen2020fair} because the property that $g(A)$ can be expanded arbitrarily to a fair set of $k$ centers is maintained.

%Finally, we justify the runtime of our algorithm. Again, we will take advantage of similarities with the equality-based fair $k$-centers in \cite{nguyen2020fair}. Step 1 is identical to the equality-based case, and Step 4 is $O(nk)$ even in the brute-force approach of sweeping $S$ $O(1)$-many times for each demographic group. As in \cite{nguyen2020fair}, we will apply a binary search at both Steps 2 and 3. The range of the binary searches, the computation to find the range in the case of the second binary search, and the decision rules with respect to the valid search ranges in both binary searches are identical. The process for graph constructions, which are accomplished by maintaining, copying, and adding to and removing from a base-level graph, are almost identical in the two cases. The only difference with respect to graph construction is that the new edge $(s,v_c)$ will need to update its capacity every time $|A|$ changes between levels in the first binary search. This change is $O(1)$ on at most $O(\log k)$ levels, and therefore has an asymptotically insignificant contribution of $O(\log k)$ overall. Therefore, we can take advantage of the proof of and discussion on theorem 5.5 in \cite{nguyen2020fair}, along with lemma \ref{lem:nsqrtk} as a proof for the runtime of Dinic's algorithm, to obtain our total runtime:

Again, we take advantage of similarities with \cite{nguyen2020fair} for the runtime proof. Step 4 is $O(nk)$ even in the brute-force approach of sweeping $S$ $O(1)$-many times for each demographic group. The only other difference between algorithm \ref{alg:3approx} and \cite{nguyen2020fair} is during graph construction, where the weight of the edge $(s,v_c)$ must be updated at every level of the binary search. This takes $O(1)$ time on $O(\log k)$ levels, and therefore has an asymptotically insignificant contribution of $O(\log k)$ overall. Thus, we can take advantage of the proof of and discussion on theorem 5.5 in \cite{nguyen2020fair}, along with lemma \ref{lem:nsqrtk}, to obtain our total runtime:

\begin{theorem}
    \label{thm:fair_nktime}
    Algorithm \ref{alg:3approx} runs in $O(nk)$ time.\\
\end{theorem}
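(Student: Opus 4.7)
The plan is to bound the time of each of the five steps of Algorithm \ref{alg:3approx} by $O(nk)$, leveraging Lemma \ref{lem:nsqrtk} and essentially reusing the bookkeeping of \cite[Theorem 5.5]{nguyen2020fair}. Step 1 is Gonzalez's algorithm, which runs in $O(nk)$ time by Section \ref{sec:gonz}. Step 4 can be carried out by scanning $S$ a constant number of times per demographic group, giving $O(nm) \le O(nk)$ work (since $m \le k$). Step 5 is trivial.

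The heart of the argument is Steps 2 and 3, each of which is a binary search with $O(\log k)$ levels. At every level I would invoke Algorithm \ref{alg:newFairShiftTest}, whose cost Lemma \ref{lem:nsqrtk} decomposes as $O(n\sqrt{k} + T(n,k))$. The flow portion therefore contributes $O(\log k) \cdot O(n\sqrt{k}) = O(n\sqrt{k}\log k)$ to the total, which is already $O(nk)$. So the only remaining issue is the cumulative cost $\sum T(n,k)$ of maintaining the flow graph $G'_A$ across all $O(\log k)$ levels of both binary searches.

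Rather than rebuilding $G'_A$ from scratch each time, I would maintain it incrementally. The binary searches scan nested prefixes $\{a_1,\dots,a_h\}$ of the Gonzalez sequence and monotone thresholds $d'$, so the set of candidate edges between $V_A$ and $V_f$ changes monotonically, and, as noted in Section \ref{sec:prelim}, for the relevant small $d'$ each $s \in S$ is a substitution candidate for at most one center in $A$. Exactly the argument of \cite[Theorem 5.5]{nguyen2020fair} then amortizes total graph maintenance to $O(nk)$. The only structural difference from the equality-constrained setting is the new vertex $v_C$ and the edges $(s,v_C)$ and $(s,t')$. The edge $(s,t')$ has the fixed capacity $\sum_i l_i$ and needs no update, while $(s,v_C)$ has capacity $k - |A|$ and must be refreshed whenever $|A|$ changes in Step 2; this is $O(1)$ per binary search level and hence $O(\log k)$ overall, which is absorbed into $O(nk)$.

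The main obstacle is verifying that the incremental construction of $G'_A$ really fits inside the $O(nk)$ budget across both binary searches; I would discharge this by citing the analogous amortization in \cite[Theorem 5.5]{nguyen2020fair} and observing that our modifications introduce only $O(\log k)$ extra bookkeeping. Summing the contributions of all steps then yields the claimed $O(nk)$ running time.
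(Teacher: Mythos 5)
Your proof is correct and follows essentially the same route as the paper: both reduce the problem to the amortized graph-maintenance argument of Theorem 5.5 in \cite{nguyen2020fair}, invoke Lemma \ref{lem:nsqrtk} for the per-call flow cost, handle Step 4 by an $O(1)$-pass sweep of $S$ per demographic group, and observe that the only new bookkeeping is the $O(1)$-per-level update of the capacity of $(s,v_c)$, contributing $O(\log k)$ overall. Your accounting of the flow portion as $O(n\sqrt{k}\log k)=O(nk)$ is a slightly more explicit version of what the paper leaves implicit, but the argument is the same.
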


We combine this with theorem \ref{thm:3approx} to obtain the main result of the section:

\begin{theorem}
    \label{thm:gen_kcenters_result}
    $k$-centers with range-based fairness has a 3-approximation algorithm with running time $O(nk)$ given by algorithm \ref{alg:3approx}.
\end{theorem}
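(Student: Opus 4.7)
The plan is straightforward: Theorem \ref{thm:gen_kcenters_result} is a direct consolidation of the two immediately preceding results applied to the same algorithm. First I would invoke Theorem \ref{thm:3approx} to obtain the 3-approximation guarantee for Algorithm \ref{alg:3approx}. Then I would invoke Theorem \ref{thm:fair_nktime} to obtain the $O(nk)$ running-time bound for the same algorithm. Since both statements refer to Algorithm \ref{alg:3approx} verbatim, concatenating the two conclusions yields the theorem.

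If I were writing out the argument from scratch rather than citing the prior theorems, I would proceed in three stages. First, establish that Step 2 of Algorithm \ref{alg:3approx} is well-defined and that the prefix length $h$ it returns satisfies $d_{h+1} \le 2 r^*$, where $r^*$ denotes the optimal fair radius; this uses Lemma \ref{fairshift} together with the monotonicity of the Gonzalez distances from Lemma \ref{seqdec}. Second, apply Lemma \ref{lem:fairshift_correctness} to certify that Step 3 actually produces the set $g(A)$ realizing the range-based fair shift with $d' = d_h/2$, and then apply Lemma \ref{sub3approx} to conclude that any completion to $k$ fair centers containing this set costs at most $3 r^*$; this is the stage where the range-based modifications to the fair-shift reduction (the extra filler vertex $v_c$ and the lower-bound edges out of $V_f$) matter most. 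Third, I would amortize the runtime: Gonzalez runs in $O(nk)$, Step 4 runs in $O(nk)$ by a brute-force sweep, and the two binary searches (over $h$ and over the distance threshold) incur $O(\log k)$ invocations of Algorithm \ref{alg:newFairShiftTest}, each costing $O(n\sqrt{k})$ by Lemma \ref{lem:nsqrtk}, with only an additional $O(1)$ per level needed to maintain $G'_A$ incrementally. Summing these bounds gives $O(nk)$ overall.

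The main obstacle is not this theorem itself, which is essentially a corollary, but the supporting lemmas that are already in hand: Lemma \ref{lem:nsqrtk}, which bounds the cost of the flow-based fair-shift test on $G'_A$ by carefully separating unit-capacity and high-capacity edges in Dinic's algorithm, and Lemma \ref{lem:fairshift_correctness}, which justifies the reduction despite the new lower-bound constraints $l_i$. Given those, the present theorem follows by direct combination of Theorems \ref{thm:3approx} and \ref{thm:fair_nktime}.
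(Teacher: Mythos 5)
Your proposal matches the paper exactly: the paper's own justification for Theorem \ref{thm:gen_kcenters_result} is precisely the one-line combination of Theorem \ref{thm:3approx} (approximation factor) and Theorem \ref{thm:fair_nktime} (running time), and your expanded sketch of how those two results are themselves established (Lemmas \ref{fairshift}, \ref{sub3approx}, \ref{lem:fairshift_correctness}, and \ref{lem:nsqrtk} plus the binary-search accounting) tracks the paper's argument as well. No gaps.
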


\section{Streaming Fair $k$-centers}
In this section, we combine our classical algorithms in the previous section with techniques of \cite{guha2009tight, kale2019small} to give a streaming algorithm for range-based fair $k$ centers. The main building block of the algorithm is a subroutine that is given a guess $R$ of the optimal radius $\OPT$ and processes the stream based on this guess. This subroutine (algorithm \ref{alg:instance}) parameterized by the guess $R$ is denoted by $\mathbb{I}(R)$.

We run several instances of the subroutine with different guesses in parallel. At all times, we keep track of a lower bound $\tau$ for $\OPT$. Initially $\tau$ is half of the minimum distance between two points in the first $k+1$ points in the stream. Let $\tau_{\min}$ be the minimum power of $(1+\eps)$ that is at least $\tau$.
The set of instances we run have guess values $\tau_{\min}, \tau_{\min} (1+\eps), \tau_{\min}(1+\eps)^2,..., \tau_{\min}(2+\eps)/\eps=\tau_{\max}$. As the lower bound is updated during the stream, subroutines with guesses that are too small, are aborted and new subroutines with higher guesses are started. More precisely, if a new lower bound $\tau'>\tau_{\min}$ is found, we abort all instances $\tau_{\min}(1+\eps)^i < \tau'$ and use them to create new instances with guesses between $\max(\tau', \tau_{\max}(1+\eps))$ and $(2+\eps)\tau'/\eps$. At the end of the stream, if one of the guesses is in $[\OPT, (1+\eps) \OPT]$, the algorithm can find a fair solution with cost at most $(1+\eps) (13+O(\eps)) \OPT$. Otherwise, our lower bound must be much smaller than \OPT. In this case, we can run our offline algorithm on a stored subset of the input and find a solution with even better approximation factor.  

Next we consider the implementation of the subroutine $\mathbb{I}(\cdot)$. Each instance $\mathbb{I}(\Delta)$ of algorithm \ref{alg:instance} maintains a set $C(\Delta)$ of centers, so-called ``pivots'', and each pivot $c$ has a disjoint set $I_c$ with up to $1$ point from demographic group $i$. The set $I_c$ is intended to be replacement candidates for $c$ to fulfill the fairness constraint. All points in $I_c$ are guaranteed to be within distance $(2+\eps)\Delta$ from $c$ . Let $I(\Delta)=\{(c,I_c)\ |\ c\in C(\Delta)\}$. Note that $I(\Delta)$ can be viewed equivalently as a collection of $(key, value)$ pairs or a mapping $key\to value$ i.e. $I(\Delta)(c) = I_c~\forall c\in C(\Delta)$. It also stores a set $C_n(\Delta)$ of up to $u_i$ points from $S_i~\forall i$ to complete the final solution for the fairness lower bounds. 

After every $k$ insertions, the algorithm checks if there is an instance with more than $k$ centers. If so, the algorithm computes a new lower bound for $\OPT$, aborts and replaces instances with respect to the new lower bound. When an instance $\mathbb{I}(\Delta)$ is aborted and replaced with a new instance with a larger guess, the new instance inherits a set of pivots from the previous instance, $C(\Delta)$, and the associated collection of sets, $I(\Delta)$. The algorithm processes the set of pivots so that there are at most $k$ points in the new set of pivots. The number of pivots increases by at most $1$ per insertion and the algorithm checks every $k$ insertions so the number of pivots is at most $2k$ at all times.

The subroutine $\mathbb{I}(\Delta)$ processes a new point $p$ in the stream as follows. If $p$ is within distance $2\Delta$ from a pivot $c$, then it is added to the set $I_c$ if $I_c$ does not have a point in the same demographic group. If $p$ is farther than $2\Delta$ from all pivots, then it is added as a new pivot. $p$ is also added to $C_n(\Delta)$ if $p\in S_i$ and $|C_n(\Delta)\cap S_i|< u_i$. 

If the subroutine $\mathbb{I}(\Delta)$ is not aborted mid-stream then we run algorithm \ref{alg:merge} on $C(\Delta)$ and $I(\Delta)$ to compute a set of centers $C_m$ that satisfy the fair-shift constraint and use $C_n$ to add centers to $C_m$ to satisfy the fairness lower bounds. If this step fails for every instance, then we can run an offline algorithm on the stored points $I(\tau_{\min})\cup C(\tau_{\min})$ from $\mathbb{I}(\tau_{\min})$.

Algorithm \ref{alg:merge} first constructs a subset $S$ of $C$ such that 1) any point in $C$ is within $(6+2\eps) \Delta$ of $S$ and 2) all points in $S$ are at least $(6+2\eps)\Delta$ apart. Next, for each point $s$ in $S$, we create a new set $I'_s$ that is the union of $I_c$ for all $c\in C$ within distance $(3+\eps)\Delta$ from $s$. Notice that $I'_s$ are disjoint. The key observation here is that if $\Delta \in [\OPT, (1+\eps) \OPT]$ then when we move all points in $I(\Delta)(c)$ to $c$, there is a fair-shift solution for the centers $S$ using replacements sets $I'_s$. A fair solution can be obtained by augmenting the results with centers from $C_n$ to fulfill the fairness constraint without hurting the solution quality. Each center in $S$ could be at distance as far as $(3+\eps)\Delta + (2+\eps)\Delta = (5+2\eps)\Delta$ from its replacement. Furthermore, centers in $C$ could be at distance $(6+2\eps)\Delta$ from $S$ so overall, the final solution covers $C$ at distance $(11+4\eps)\Delta$. Each point in $C$ represents a cluster of radius $(2+\eps)\Delta$ so the final solution covers all points at distance $(13+5\eps)\Delta$.

If none of our guesses is in the range $[\OPT, (1+\eps) \OPT]$ then it must be the case that $\OPT \gg \tau_{\min}/\eps$ and we can simply run our offline algorithm on the stored points in $I(\tau_{\min})$. Because all points are within distance $O(\tau_{\min})$ from the stored points and the offline algorithm gives a $3$-approximation, the result is a $3+O(\eps)$-approximation.

\begin{algorithm}[htb!]
\caption{1-pass fair $k$-centers}
\label{alg:1pass}
\SetKwFunction{Process}{Process}
\SetKwFunction{MergeCenters}{MergeCenters}
\SetKwFunction{Replacement}{Replacement}
\SetKwProg{Fn}{Function}{:}{}
\DontPrintSemicolon
\SetAlgoLined
Let $\tau$ be half the minimum distance among the first $k+1$ points in the stream.\;
Let $\tau_{\min}$ be the minimum power of $(1+\eps)$ that is at least $\tau$\;
$G = \{\tau_{\min}, \tau_{\min}(1+\eps),\ldots, \tau_{\min}(1+\eps)^\beta = (2+\eps)\tau_{\min}/\eps\}$\;
$\tau_{\max} = \tau_{\min}(1+\eps)^\beta$\;
\For{$\Delta \in G$} {
  Initialize $\mathbb{I} (\Delta)$ with $C(\Delta)\gets \emptyset, I(\Delta)\gets \emptyset, C_n(\Delta)\gets \emptyset$\;
  \Process($\Delta,p, \{p\}$) $\forall p$ in the first $k+1$ points\;
}
\For{points $p_i$ after first $k+1$ points} {  
  \For{$\Delta \in G$}{
    \Process($\Delta, p_i, \{p_i\}$)\;
  }
  \If{($i \equiv 0 \pmod{k}$ or $p_i$ is the final point) and ($\exists \Delta\in G$ s.t. $|C(\Delta)|> k$)}{
    $\tau_r\gets \tau$\;
    \For{$\Delta\in G$}{
      \If{$|C(\Delta)|>k$}{
        Run Gonzalez on $C(\Delta)$ and compute $\tau' = d_{k+1}/2$\;
        $\tau_r\gets \max(\tau_r, \tau')$\;
      }
    }
    $\tau\gets \tau_r$\;
    $\tau_{old}\gets \tau_{\min}$\;
    Update $\tau_{\min}$ to the minimum power of $(1+\eps)$ that is at least $\tau_r$\;
    $\tau_{\max}\gets \tau_{\min}(1+\eps)^{\beta}$\;
    \For{$\Delta \in \{\tau_{\min}, \tau_{\min}(1+\eps), \ldots, \tau_{\max}\}\setminus G$} {\label{line:begin-new-stream}
      Initialize a new instance $\mathbb{I}(\Delta)$ with $C_n(\Delta)\gets C_n(\tau_{old})$\;
      Run \Process($\Delta, p, I(\tau_{old})(p)$) for $p \in C(\tau_{old})$ in the order they are selected by the Gonzalez algorithm on input $C(\tau_{old})$ (first $k+1$ points in the order they are selected then the rest arbitrarily)\;\label{line:end-new-stream}
    }
    Abort all $\mathbb{I}(\tau)\in G$ with $\tau < \tau_{\min}$\;
    $G\gets \{\tau_{\min}, \tau_{\min}(1+\eps), \ldots, \tau_{\max}\}$\;
  }
}
\For{$\Delta\in G$}{
  $C_m =  \MergeCenters(\Delta, C(\Delta), I(\Delta))$\;
  Add centers from $C_n$ to $C_m$ so $C_m$ satisfies fairness's lower bounds if $|C_m| \neq \emptyset$ otherwise \KwRet ``\textbf{failure}''\;
  \KwRet $C_m$\;
}
Run the offline algorithm on $I(\tau_{\min})\cup C(\tau_{\min}) \cup C_n(\tau_{\min})$ \label{line:offlinecall}\;
\end{algorithm}

\begin{algorithm}[htb!]
\caption{Process($\Delta, p, I$): Subroutine $\mathbb{I}(\Delta)$ processes a point $p$ and its associated set $I$.}
\label{alg:instance}
\SetKwFunction{StreamKcenters}{StreamKcenters}
\SetKwFunction{MergeCenters}{MergeCenters}
\SetKwFunction{Replacement}{Replacement}
\SetKwProg{Fn}{Function}{:}{}
\DontPrintSemicolon
\SetAlgoLined

\uIf{$\exists c\in C(\Delta)$ s.t. $d(c, p) \le 2\Delta$}{
  \tcp{Add $p$ to $c$'s cluster}
  $I(\Delta)(c)\gets I(\Delta)(c)\cup I$\;
  Remove points from $I(\Delta)(c)$ if needed so that $|I(\Delta)(c)\cap S_i|\le 1~\forall i$\;
}
\Else(\tcp*[h]{Add new pivot}){
  $C(\Delta) \gets C(\Delta) \cup \{p\}$ \;
  $I(\Delta) \gets I(\Delta) \cup \{(p, I)\}$ \;
}
Add $I$ to $C_n(\Delta)$ and trim it so that $|C_n(\Delta)\cap S_i|\le u_i~\forall i$\;
\end{algorithm}
\begin{algorithm}[htb!]
\caption{MergeCenters$(\Delta)$}
\label{alg:merge}
\SetKwFunction{StreamKcenters}{StreamKcenters}
\SetKwFunction{MergeCenters}{MergeCenters}
\SetKwFunction{neighbor}{neighbor}
\SetKwProg{Fn}{Function}{:}{}
\DontPrintSemicolon
\SetAlgoLined
$D \leftarrow \infty$\;
Let $c$ be an arbitrary point in $C(\Delta)$\;
$C \leftarrow \{c\}$\;
\While{$D > (6+2\eps) \Delta$ and $|C| < |C(\Delta)|$}{
  $c = \argmax_{ c \in C(\Delta)}{ \min_{c' \in C} d(c, c')}$\;
  $D \leftarrow  \min_{c' \in C} d(c, c') $\;
  \If{$D > (6+2\eps) \Delta$}{
    $C \leftarrow C \cup \{c\}$\;
  }
}
$D \gets \min_{c\ne c' \in C} d(c,c')$\;
\For{$c \in C$}{
  $I'_c \leftarrow  \cup \{I(\Delta)(c') : d(c,c') \leq (3+\eps) \Delta\}$\;
}
Keep in $I'_c$ at most one element from each group $S_i$ and remove the rest. \label{line:afterfilter}\;
Run algorithm \ref{alg:newFairShiftTest} with $S = \cup_{c \in C} I'_c, A = C_0, d' = D/2$ and pretending that all points in $I(\Delta)(c')$ are located at $c'$\;
\end{algorithm}

\subsection{Analysis}
We first analyze the running time and space of algorithm \ref{alg:1pass}. We start by showing that every time we update our lower bound estimate from $\tau$ to $\tau_r$, it must be the case that $\tau \le \tau_r \le \OPT$. 

\begin{lemma}
\label{lem:2opt}
$\tau_r \leq \OPT$
\end{lemma}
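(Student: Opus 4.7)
The plan is to verify by induction that the lower bound estimate always satisfies $\tau \le \OPT$, and then to derive the stronger claim about $\tau_r$ from the same argument applied to each instance's pivot set.

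First I would establish the base case. Before any update, $\tau$ is defined as half the minimum pairwise distance among the first $k+1$ points in the stream. Any feasible fair solution partitions these $k+1$ input points among at most $k$ clusters, so by pigeonhole two of them are assigned to the same optimal center. These two points are then each within $\OPT$ of that center, hence within $2\OPT$ of each other. Taking the minimum pairwise distance and dividing by $2$ gives $\tau \le \OPT$. This establishes the invariant at initialization.

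For the inductive step, consider an update event and let $\tau$ denote the value held just before the update. By the induction hypothesis $\tau \le \OPT$, so the initialization $\tau_r \gets \tau$ preserves the desired bound. The remaining contributions to $\tau_r$ come from the instances $\Delta \in G$ with $|C(\Delta)| > k$, which contribute $\tau' = d_{k+1}/2$ from Gonzalez run on $C(\Delta)$. The key point is that $C(\Delta) \subseteq S$ (every pivot was an input point inserted during Process), and that by Lemma~\ref{seqdec} the sequence $(d_2,\ldots,d_{k+1})$ returned by Gonzalez is non-increasing. Consequently the first $k+1$ pivots selected by Gonzalez are pairwise at distance at least $d_{k+1}$. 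Since these are $k+1$ input points and any fair optimal solution uses only $k$ centers, pigeonhole again forces two of them to share an optimal center, hence to lie within $2\OPT$ of each other. This yields $d_{k+1} \le 2\OPT$, so $\tau' \le \OPT$. Taking the maximum over all such contributions with the previous $\tau$ preserves $\tau_r \le \OPT$.

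There is no real obstacle here beyond being careful that the pivot set $C(\Delta)$ is a set of genuine input points (so that pigeonhole against the optimal fair partition is legitimate) and that the update only increases $\tau$ when the pivot count exceeds $k$ on some instance, which is exactly the condition enabling the Gonzalez-based argument. Both facts are immediate from the construction in Algorithm~\ref{alg:instance} and the guard in Algorithm~\ref{alg:1pass}. Combining base case and inductive step gives $\tau_r \le \OPT$ at every update, which is the claim of the lemma.
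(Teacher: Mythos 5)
Your proof is correct and uses the same core argument as the paper: by Lemma~\ref{seqdec} the first $k+1$ Gonzalez-selected points are pairwise at distance at least $d_{k+1}$, and pigeonhole against the $k$ optimal clusters forces two of them within $2\OPT$, giving $d_{k+1}/2 \le \OPT$. Your additional induction over update events, covering the initial $\tau$ and the $\max$ with the previous estimate, is a slightly more complete bookkeeping of the same idea that the paper leaves implicit.
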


\begin{proof}
Recall that $\tau_r$ is half the distance between the $k+1$st point selected by the Gonzalez algorithm and the previous $k$ points. By lemma \ref{seqdec}, this is also half the minimum distance among the first $k+1$ points selected by the Gonzalez algorithm. Since we have $k+1$ points, by the pigeonhole principle, there exists two points $c_i,c_j$ that belong to the same cluster in the optimal solution. This implies $d(c_i, c_j) \leq 2 \OPT$. By lemma \ref{seqdec}, we have $\tau_r = d_{k+1} /2 \leq d(c_i, c_j)/2$. 
\end{proof}

\begin{lemma}
\label{lem:increase}
$\tau \le \tau_r$
\end{lemma}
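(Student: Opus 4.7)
The plan is to prove this directly by inspection of the update rule for $\tau_r$ in algorithm \ref{alg:1pass}. In the block of the algorithm guarded by the condition that some instance has $|C(\Delta)| > k$, the variable $\tau_r$ is first initialized via the assignment $\tau_r \gets \tau$. All subsequent modifications to $\tau_r$ inside the enclosing loop are of the form $\tau_r \gets \max(\tau_r, \tau')$, where $\tau'$ is the quantity $d_{k+1}/2$ computed by running the Gonzalez algorithm on $C(\Delta)$. Since each such assignment can only leave $\tau_r$ unchanged or increase it, the value of $\tau_r$ at the end of this block is at least its initial value, which is $\tau$.

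Thus the argument is just that $\tau_r$ is a running maximum whose starting value is $\tau$. I would phrase it formally as a one-line induction on the iterations of the \textbf{for} loop over $\Delta \in G$: at the start of the loop we have $\tau_r = \tau$, and each iteration replaces $\tau_r$ by $\max(\tau_r, \tau')$, which preserves the invariant $\tau_r \ge \tau$. Therefore the invariant holds at termination, giving $\tau \le \tau_r$ as required. There is no substantive obstacle here; the lemma is essentially a syntactic observation about the pseudocode, and its main role is to combine with lemma \ref{lem:2opt} so that the pair of inequalities $\tau \le \tau_r \le \OPT$ can be used downstream to argue that the stream of lower bound estimates is monotonically nondecreasing and always a valid lower bound on $\OPT$.
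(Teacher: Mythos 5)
Your proof is correct, but it is not the argument the paper gives. You treat the lemma as a syntactic fact about the pseudocode: $\tau_r$ is initialized to $\tau$ and thereafter only updated via $\tau_r \gets \max(\tau_r,\tau')$, so it can never drop below $\tau$. That is a valid and airtight observation, and it does establish $\tau \le \tau_r$. The paper instead proves the geometric fact that each candidate $\tau'$ is itself already at least $\tau$: since \texttt{Process} only creates a new pivot when a point is farther than $2\Delta$ from every existing pivot, the points of $C(\Delta)$ are pairwise more than $2\Delta$ apart, so Gonzalez run on $C(\Delta)$ yields $d_{k+1} \ge 2\Delta \ge 2\tau_{\min} \ge 2\tau$, hence $\tau' = d_{k+1}/2 \ge \Delta \ge \tau$. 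The difference matters downstream: your argument leans entirely on the initialization line $\tau_r \gets \tau$ and would say nothing if that line were absent, whereas the paper's argument shows the strictly stronger statement $\tau' \ge \Delta$ for every over-full instance, which is exactly the inequality reused in the proof of Lemma~\ref{lem:mostkpoints} (there one needs $2\Delta \ge 2\tau_r \ge d_{k+1}$ to conclude that at most $k$ old pivots survive as new pivots). So your proof suffices for the literal statement of Lemma~\ref{lem:increase}, but if you only carry forward the running-maximum observation you will have to re-derive the pivot-separation bound later anyway; the paper's proof front-loads that geometric content here.
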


\begin{proof}
By construction, all points in $C(\Delta)$ are at distance at least $2\Delta$ apart. Therefore, when running the Gonzalez algorithm on $C(\Delta)$, the distance computed $d_{k+1}=2\tau_r$ is at least $2\Delta$, which in turn is at least $2\tau$.
\end{proof}

To bound the running time, we show that there are at most $k$ pivots in each new instance after processing the pivots of the old instance.  
\begin{lemma}
\label{lem:mostkpoints}
Each new instance $\mathbb{I}(\Delta)$ created between lines \ref{line:begin-new-stream} and \ref{line:end-new-stream}
has at most $k$ pivots after processing the pivots from the old instance.
\end{lemma}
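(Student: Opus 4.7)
The plan is to show that when the new instance $\mathbb{I}(\Delta)$ replays the pivots of the old instance in Gonzalez order, all but at most the first $j^* - 1$ points are absorbed into existing clusters, where $j^*$ is the smallest index whose Gonzalez distance on $C(\tau_{old})$ is at most $2\Delta$. The update step of Algorithm \ref{alg:1pass} sets $\tau_r \ge d_{k+1}/2$ whenever $|C(\tau_{old})|>k$, and this forces $2\Delta \ge 2\tau_{\min} \ge 2\tau_r \ge d_{k+1}$, so $j^* \le k+1$ and the claim follows.

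First I would dispose of the trivial case $|C(\tau_{old})| \le k$: the new instance processes at most $k$ points, so its pivot set has size at most $k$ regardless. Assume henceforth $|C(\tau_{old})| > k$. Let $b_1, b_2, \ldots$ denote the Gonzalez ordering of $C(\tau_{old})$ and let $d_i$ be the minimum distance from $b_i$ to $\{b_1, \ldots, b_{i-1}\}$. Lemma \ref{seqdec} tells us $(d_i)$ is non-increasing, so there exists a smallest index $j^* \in \{2,\ldots,k+1\}$ with $d_{j^*} \le 2\Delta$.

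The heart of the argument is the invariant that throughout the replay, the pivot set $P$ of $\mathbb{I}(\Delta)$ equals $\{b_1, \ldots, b_{j^*-1}\}$, established by induction on the processing step. For $i < j^*$, by the inductive hypothesis $P$ equals $\{b_1, \ldots, b_{i-1}\}$ just before $b_i$ is processed, so the distance from $b_i$ to $P$ is exactly $d_i > 2\Delta$ (by minimality of $j^*$), and the Process routine adds $b_i$ as a new pivot. For $i \ge j^*$, the Gonzalez property tells us $b_{j^*}$ was picked as the argmax of $d(\,\cdot\,,\{b_1,\ldots,b_{j^*-1}\})$ over $C(\tau_{old})\setminus\{b_1,\ldots,b_{j^*-1}\}$, so $d(b_i, \{b_1, \ldots, b_{j^*-1}\}) \le d_{j^*} \le 2\Delta$; since $P \supseteq \{b_1,\ldots,b_{j^*-1}\}$, we get $d(b_i, P) \le 2\Delta$ and $b_i$ is absorbed into an existing cluster with $P$ unchanged.

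The step requiring the most care is the tail $b_{k+2}, b_{k+3}, \ldots$, which is processed in \emph{arbitrary} order rather than Gonzalez order, so one might worry that an adversarial ordering could let new pivots sneak in once the clean Gonzalez recurrence is broken. The saving observation is that the bound $d(b_i, \{b_1, \ldots, b_{j^*-1}\}) \le d_{j^*}$ is a purely structural property of the Gonzalez selection on $C(\tau_{old})$ and holds for every $i \ge j^*$ independent of processing order; since $j^* \le k+1$, this bound covers every tail point. Hence the invariant $P = \{b_1,\ldots,b_{j^*-1}\}$ is preserved through the tail, giving $|P| = j^* - 1 \le k$ at the end of the replay, which is exactly the claim.
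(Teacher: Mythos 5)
Your proof is correct and follows essentially the same route as the paper's: a prefix of the Gonzalez-ordered pivots of $C(\tau_{old})$ becomes the new pivot set, the prefix ends at the first index $j^*$ with $d_{j^*}\le 2\Delta$, and $2\Delta\ge 2\tau_r\ge d_{k+1}$ forces $j^*\le k+1$. Your explicit treatment of the arbitrarily-ordered tail beyond index $k+1$ (via the structural bound $d(b_i,\{b_1,\dots,b_{j^*-1}\})\le d_{j^*}$, which is order-independent) is a detail the paper's terser proof leaves implicit, but it is the same argument.
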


\begin{proof}
Since the points are processed in the order they are selected by the Gonzalez algorithm and by the property of Lemma~\ref{seqdec}, a prefix of the sequence of pivots are selected as new pivots and it stops at some index $i$ when $d_i \le 2\Delta$ (since $d_i$ is a decreasing sequence and all points from index $i$ onward are within distance $d_i$ from previous points). By the construction of $\tau_r$, we have $2\Delta \ge 2\tau_r\ge d_{k+1}$. Thus, all old pivots from the $k+1$st onward are not selected as new pivots.
\end{proof}

\begin{lemma}
\label{lem:memoryruntime}
Algorithm \ref{alg:1pass} stores $O((k m+\sum_i u_i)\log(1/\eps)/\eps)$ points and has running time $O\left(\left(\log(1/\eps) /\eps \right)  nk\right)$.
\end{lemma}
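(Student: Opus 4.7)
The plan is to bound space and time separately, with both governed by the number of simultaneously active instances. Observe that $|G|=\beta+1$ where $(1+\eps)^\beta=(2+\eps)/\eps$, so $\beta+1=\lceil \log_{1+\eps}((2+\eps)/\eps)\rceil + 1 = O(\log(1/\eps)/\eps)$ using $\log(1+\eps)=\Theta(\eps)$. The rest of the argument combines this with per-instance structural invariants from Algorithm~\ref{alg:instance} and Lemma~\ref{lem:mostkpoints}.

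For the space bound, I would first bound a single instance and then multiply. In one instance $\mathbb{I}(\Delta)$, Lemma~\ref{lem:mostkpoints} gives at most $k$ pivots immediately after reinitialization; at most $k$ more pivots can be created during the $k$ insertions until the next checkpoint, so $|C(\Delta)|\le 2k$ at all times. Each pivot $c$ carries a set $I(\Delta)(c)$ with at most one point per demographic group, so $\sum_c |I(\Delta)(c)|\le 2km$. The auxiliary set $C_n(\Delta)$ stores at most $u_i$ elements from each group $i$, contributing $\sum_i u_i$. Summing over $O(\log(1/\eps)/\eps)$ active instances yields $O((km+\sum_i u_i)\log(1/\eps)/\eps)$ stored points.

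For the running time I would split the work into three parts: (i) per-insertion streaming cost, (ii) checkpoint and reinitialization cost, and (iii) end-of-stream cost. For (i), each stream point is dispatched to $O(\log(1/\eps)/\eps)$ instances, and one call to \textsc{Process} compares the new point against at most $2k$ pivots at cost $O(k)$, totalling $O(nk\log(1/\eps)/\eps)$. For (ii), checkpoints occur once per $k$ insertions, giving at most $O(n/k)$ of them; each checkpoint runs Gonzalez on each of $O(\log(1/\eps)/\eps)$ pivot sets of size $O(k)$ at cost $O(k^2)$, and, when it triggers a reinitialization, re-streams up to $2k$ old pivots through $O(\log(1/\eps)/\eps)$ fresh instances at $O(k)$ per insertion. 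This gives $O(k^2\log(1/\eps)/\eps)$ per checkpoint, hence $O(nk\log(1/\eps)/\eps)$ overall. For (iii), \textsc{MergeCenters} and, if invoked, the offline algorithm at line~\ref{line:offlinecall} operate on stored inputs of size $O(km+\sum_i u_i)$; by Lemma~\ref{lem:nsqrtk} and Theorem~\ref{thm:fair_nktime}, their cost is asymptotically dominated by (i)+(ii) since $m\le k$.

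The main obstacle is the amortization in (ii): I need to confirm that re-streaming an old pivot together with its attached set of up to $m$ group representatives still costs only $O(k)$. Locating the host pivot in the new instance requires one scan over the $\le 2k$ current pivots, and merging the attached set into $I(\Delta)(c)$ together with the ``at most one per group'' trimming is $O(m)=O(k)$, so the per-pivot cost stays $O(k)$ and the amortization over $O(n/k)$ checkpoints yields the claimed running time. A secondary point worth checking is that the guess grid $G$ is always recomputed to size $\beta+1$ after an update, so the ``$O(\log(1/\eps)/\eps)$ active instances'' invariant is preserved throughout the stream and no dependence on the aspect ratio $\Lambda$ sneaks in.
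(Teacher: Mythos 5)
Your proposal is correct and follows essentially the same route as the paper's proof: the $2k$ bound on pivots via Lemma~\ref{lem:mostkpoints} plus one-per-group trimming and $C_n$ for space, and the decomposition into per-point processing, amortized checkpoint/reinitialization work over $O(n/k)$ checkpoints, and end-of-stream MergeCenters/offline cost for time. The extra details you supply (the explicit $O(\log(1/\eps)/\eps)$ bound on $|G|$ and the $O(k)$ cost of re-streaming a pivot with its attached set) are consistent with, and slightly more careful than, the paper's own argument.
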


\begin{proof}
The number of pivots in each instance is reduced to at most $k$ after every $k$ insertions and it increases by at most $1$ after every insertion so the number of pivots is at most $2k$ at all times. Each pivot has a associated set of at most $ m$ points since it stores at most $1$ point from $S_i$ for each demographic group. We also store the set $C_n$, which has size at most $\sum_{i =1}^m u_i $. Since we have $O(\log (1/\eps)/\eps)$ instances,  the algorithm stores  $O\left(\left(km+\sum_{i =1}^m u_i\right)\log (1/\eps)/\eps\right)$ points.

 For every point in the stream, we run algorithm \ref{alg:instance} with runtime $O(k)$ for each instance. This takes $O(nk \log (1/\eps)/\eps)$  time overall. Note that for every $k$ points in the stream, we run Gonzalez on the set of pivot for of instance that has more than $k$ pivots. This takes $\left( k^2 \log (1/\eps)/\eps\right)$ since there are at most $2k$ pivots in each instance. Then, we run  a sub-routine (algorithm \ref{alg:instance})  for each instance  to process the pivots from $\mathbb{I}(\tau_{old})$. Since $|C(\tau_{old})| \leq 2k$ , each instance takes $O(k^2)$ time for every k points. Thus, for any set of $k$ contiguous points, the amortized time for running Gonzalez and the sub-routine to process the pivots from $\mathbb{I}(\tau_{old})$  is $\frac{O(k^2  \log (1/\eps)/\eps))}{k} = O(k \log (1/\eps)/\eps)$.      For algorithm \ref{alg:merge}, running the max flow program each time in algorithm $\ref{alg:newFairShiftTest}$ takes $O(n \sqrt{k})$ by lemma \ref{lem:nsqrtk} and the fact that the algorithm only takes a subset of the whole stream as input. Thus, it takes $O(n \sqrt{k} \log(1/\eps) / \eps)$ overall.   If at the end of algorithm \ref{alg:1pass}, all the instances report failure, then we run an offline algorithm from section \ref{sec:offline} on the subset of the stream. This takes $O(nk)$. Thus, the runtime for algorithm \ref{alg:1pass} is  $O(n k \log(1/\eps) / \eps)$.
\end{proof}

To bound the approximation factor, we first introduce some notation. For an instance $\mathbb{I}(\Delta)$, let $E(\Delta)$ be the set of points processed by the instance from its creation (beginning of the stream or when the lower bound is updated) to its termination (end of the stream or when the lower bound is updated). Let $C_o(\Delta)$ denote the set of pivots of instance $\mathbb{I}(\tau_{old})$ that is used to initialize $\mathbb{I}(\Delta)$ from line  \ref{line:begin-new-stream} to line \ref{line:end-new-stream}. If $\mathbb{I}(\Delta)$ starts from the beginning of the stream then $C_o(\Delta)=\emptyset$. 

\begin{lemma}
\label{lemma:approxloss}
There exists a mapping $f$ from points that arrived before $E(\Delta)$ to points in $C_o(\Delta)$ with the following properties.
First, for any point $e$ arriving before $E(\Delta)$, we have $d(e,f(e)) \leq \eps \Delta$. Second, for any pivot $c\in C_o(\Delta)$, we have $|I(\tau_{old})(c)\cap S_i| \ge \min(|f^{-1}(c) \cap S_i|, 1)$. 
\end{lemma}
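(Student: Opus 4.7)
The plan is to prove the lemma by strong induction on the creation time of $\mathbb{I}(\Delta)$. The base case is when $\mathbb{I}(\Delta)$ is created at the start of the stream: then $C_o(\Delta)=\emptyset$ and no point arrives before $E(\Delta)$, so $f$ is the empty map and both conditions hold vacuously.

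For the inductive step, $\mathbb{I}(\Delta)$ is created at an epoch transition by inheriting the pivots of $\mathbb{I}(\tau_{old})$. The inductive hypothesis applied to $\mathbb{I}(\tau_{old})$ yields a map $f_{old}$ from points arriving before $E(\tau_{old})$ into $C_o(\tau_{old})$ with $d(e,f_{old}(e))\le \eps\tau_{old}$. I would construct $f$ in two cases. If $e$ arrived during $E(\tau_{old})$, then $e$ was processed directly by $\mathbb{I}(\tau_{old})$ and either became a pivot or was merged into an existing pivot $c$ at distance at most $2\tau_{old}$; set $f(e)$ to this pivot, giving $d(e,f(e))\le 2\tau_{old}$. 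If $e$ arrived before $E(\tau_{old})$, let $p = f_{old}(e)\in C_o(\tau_{old})$; at $\mathbb{I}(\tau_{old})$'s initialization, $p$ itself is passed through Process and so either becomes a pivot of $\mathbb{I}(\tau_{old})$ or is merged into some pivot $c$ at distance at most $2\tau_{old}$. Setting $f(e)$ to this pivot, the triangle inequality gives $d(e,f(e))\le d(e,p)+d(p,c)\le (2+\eps)\tau_{old}$. Because pivots are never deleted during an instance's lifetime, $f(e)\in C(\tau_{old})=C_o(\Delta)$.

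The bound $d(e,f(e))\le \eps\Delta$ follows from the fact that $\mathbb{I}(\Delta)$ is a newly created instance, so $\Delta$ is a power of $(1+\eps)$ in the new range that does not lie in the old range $[\tau_{old},\,(2+\eps)\tau_{old}/\eps]$. Consequently $\Delta>(2+\eps)\tau_{old}/\eps$, giving $(2+\eps)\tau_{old}<\eps\Delta$. For the second property, suppose $e\in f^{-1}(c)\cap S_i$. If $e$ arrived during $E(\tau_{old})$, then $\{e\}$ is merged into $I(\tau_{old})(c)$ by Process. If $e$ arrived before $E(\tau_{old})$ with $p=f_{old}(e)$, the inductive hypothesis applied to $\mathbb{I}(\tau_{old})$ ensures that the replacement set associated with $p$ in the predecessor instance of $\mathbb{I}(\tau_{old})$ contains at least one point of $S_i$, and Process merges this entire set into $I(\tau_{old})(c)$ when handling $p$ at initialization. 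The trim step in Process only discards surplus beyond one per group, so $|I(\tau_{old})(c)\cap S_i|\ge 1$ holds right after the merge and persists through all subsequent merges and trims.

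The main obstacle will be verifying that the strict gap $\Delta > \tau_{\max}^{old}$ enforced by the update rule is precisely what absorbs the single-step constant $(2+\eps)\tau_{old}$ into $\eps\Delta$, so that the distance bound does not compound across the many epoch transitions that may precede the creation of $\mathbb{I}(\Delta)$.
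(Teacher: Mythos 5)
Your proposal is correct and follows essentially the same route as the paper's proof: induction over epoch transitions, a case split on whether $e$ arrived during $E(\tau_{old})$ or earlier, the triangle inequality giving the bound $(2+\eps)\tau_{old}$, which is absorbed into $\eps\Delta$ via the spawning condition $\Delta \ge (2+\eps)\tau_{old}/\eps$, and the observation that the sets $I(\cdot)$ are updated in lockstep with $f$ up to trimming to one point per group. The only cosmetic difference is that you parameterize the induction by the instance's creation time while the paper counts re-spawns after $e$ is processed; the argument is otherwise identical.
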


\begin{proof}
  We prove this by induction on the number of times an instance $I(\Delta)$ is re-spawned (or aborted) with a different radius guess after $e$ is processed. When a point $p$ is inserted, the value $f(p)$ is initialized to its pivot and when a new instance $\mathbb{I}(\Delta)$ is created from pivots in $\mathbb{I}(\tau_{old})$, if an old pivot $p$ is added to the cluster of a new pivot $p'$, then all points $e$ with $f(e)=p$ are updated so that $f(e)=p'$.
  
  If $\mathbb{I}(\Delta)$ starts from the beginning of the stream then there is no point arriving before $E(\Delta)$ and the lemma vacuously holds. Next, consider the case $\mathbb{I}(\Delta)$ is spawned using the pivots from $\mathbb{I}(\tau_{old})$. 
  %and $\tau$ is previously used to initiate instance $V(\tau)$ (also referred as $\mathbb{I}(\tau')$),   
  Then there are two cases: either $e\in E(\tau_{old})$ or $e$ arrived
  before $E(\tau_{old})$.  If $e\in E(\tau_{old})$, there exists a pivot in the old instance $a(e) \in C(\tau_{old})=C_0(\Delta)$ such that:
  \begin{align*}
        d(e,a(e))\le 2\guess_{old} \le 2 \eps\Delta/(2+\eps) \le \eps \Delta
  \end{align*}
  
  The second inequality holds because we spawn a new instance with guess $\Delta$ only when $\Delta \geq (2+\eps) \tau_{old} /\eps$. 
  
  Otherwise, by the inductive hypothesis there is a point
  $e'\in C_o(\tau_{old})$ such that $d(e,e') \le \eps\tau_{old}$. By the logic of Algorithm~\ref{alg:instance}, there exists a pivot $c \in C(\tau_{old})=C_o(\Delta)$ such that $d(c,e') \leq 2 \tau_{old}$. By the triangle inequality,
  \begin{align*}
      d(e,c) \leq d(e,e') + d(c,e') \leq \eps \tau_{old} + 2 \tau_{old}& \\\le (2+\eps) \frac{\eps \Delta}{2+\eps} = \eps \Delta.
  \end{align*}
  
  For the second part of the lemma, observe that the associated sets $I(\Delta)$ are updated exactly the same way as we update the mapping function $f$. The only exception is that each set drops points in demographic $i$ when it already has $1$ point from $S_i$. Thus, the number of points in demographic $i$ that are retained in $I(\tau_{old})(c)$ is $\min(1, |f^{-1}(c)\cap S_i|)$.
  
\end{proof}

From lemma \ref{lemma:approxloss}, we can apply triangle inequality on $d(e,p)$ and the distance between $p$ and its pivot in $C(\Delta)$  to get the following corollary. 
\begin{corollary}
\label{lemma:coroll}
There is a mapping $f$ from points arriving before and during the execution of $\mathbb{I}(\Delta)$ to $C(\Delta)$ with the following properties. If $e$ is a point that arrives before or during the execution of $\mathbb{I}(\Delta)$, then $d(e, f(e))\le (2+\eps)\Delta$. Furthermore, for any pivot $c\in C(\Delta)$, we have $|I(\Delta)(c)\cap S_i|\ge \min(|f^{-1}(c) \cap S_i|, 1)~\forall i$.
\end{corollary}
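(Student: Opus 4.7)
The plan is to construct the mapping $f$ for Corollary~\ref{lemma:coroll} by composing the mapping from Lemma~\ref{lemma:approxloss} with the pivot-assignment done during the execution of $\mathbb{I}(\Delta)$, and then to bound distances via a single triangle inequality. I would split points into two categories based on when they are processed: (i) points $e$ arriving during $E(\Delta)$, and (ii) points $e$ arriving before $E(\Delta)$.

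For case (i), Algorithm~\ref{alg:instance} guarantees that a newly inserted point $p$ either becomes a pivot (so we can set $f(p)=p$ with $d(p,f(p))=0$) or is absorbed into an existing pivot's cluster at distance at most $2\Delta$; either way, $f(e)\in C(\Delta)$ with $d(e,f(e))\le 2\Delta\le (2+\eps)\Delta$. For case (ii), Lemma~\ref{lemma:approxloss} supplies a point $p=f_{old}(e)\in C_o(\Delta)$ with $d(e,p)\le \eps\Delta$. Because $p$ is processed by $\mathbb{I}(\Delta)$ between lines \ref{line:begin-new-stream} and \ref{line:end-new-stream}, Algorithm~\ref{alg:instance} assigns it to some pivot $c\in C(\Delta)$ with $d(p,c)\le 2\Delta$ (either $p$ itself is promoted to a pivot, in which case $c=p$, or it joins an existing pivot's cluster). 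Setting $f(e)=c$ and applying the triangle inequality yields
\[
d(e,f(e)) \le d(e,p)+d(p,c) \le \eps\Delta + 2\Delta = (2+\eps)\Delta,
\]
which is exactly the required bound.

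For the second property, I would argue that the bookkeeping done by Algorithm~\ref{alg:instance} on $I(\Delta)$ is precisely parallel to the re-assignment of $f$: whenever a point $e$ (or, inductively, an old pivot $p$ with its set $I(\tau_{old})(p)$) is absorbed into a new pivot $c$, the accompanying group-element is passed into $I(\Delta)(c)$ and is only discarded if $I(\Delta)(c)$ already contains a member of the same demographic $S_i$. By Lemma~\ref{lemma:approxloss}, the old associated sets already satisfy $|I(\tau_{old})(p)\cap S_i|\ge \min(|f_{old}^{-1}(p)\cap S_i|,1)$, and merging preserves the invariant that at least one representative from each non-empty group survives. Hence for every $c\in C(\Delta)$ and every demographic $i$, if at least one point of $S_i$ maps to $c$ under $f$, then $I(\Delta)(c)$ retains a point of $S_i$, giving $|I(\Delta)(c)\cap S_i|\ge \min(|f^{-1}(c)\cap S_i|,1)$.

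There is no real obstacle here: the corollary is a routine bookkeeping extension of Lemma~\ref{lemma:approxloss} whose only nontrivial ingredient is the triangle inequality $\eps\Delta + 2\Delta = (2+\eps)\Delta$. The most delicate bit to state cleanly is the second property, where one must verify that $f^{-1}$ and $I(\Delta)$ evolve synchronously under the two operations Algorithm~\ref{alg:instance} performs (promotion to a new pivot, and absorption into an existing pivot's cluster with deduplication by group); once that correspondence is made explicit, the bound follows from the corresponding inequality in Lemma~\ref{lemma:approxloss}.
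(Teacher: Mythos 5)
Your proposal is correct and follows essentially the same route as the paper, which derives the corollary from Lemma~\ref{lemma:approxloss} by composing the old mapping with the pivot assignment of Algorithm~\ref{alg:instance} and applying the triangle inequality $\eps\Delta + 2\Delta = (2+\eps)\Delta$; your treatment of the second property likewise mirrors the paper's observation that $I(\Delta)$ is updated in lockstep with $f$ up to per-group deduplication.
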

%\begin{corollary}
%For any point $e$ in the stream that does not arrive after $E(\Delta)$, 
%\end{corollary}

\begin{lemma}
\label{lem:worstcase}
Suppose there is an instance $\mathbb{I}(\Delta)$ with $\Delta \in [\OPT, (1+\eps) \OPT]$. At the end of the stream for this instance, MergeCenters (algorithm \ref{alg:merge}) finds a set of centers $C_m$ satisfying the fair-shift condition and $d(C_m, e) \leq (13 + O(\eps)) \OPT~\forall e\in S$. As a consequence, the final output of the algorithm \ref{alg:1pass} satisfies fairness.
\end{lemma}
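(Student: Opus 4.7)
The plan is to prove the lemma in two stages. First, I would exhibit an explicit witness, built from an optimal fair solution $C^*$, that forces the fair-shift call inside MergeCenters to succeed. Second, I would chain three triangle inequalities to convert the three ``hops'' from an input point to its final assigned center into the target bound $(13 + O(\eps))\OPT$.

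For existence of the witness, I would invoke Corollary \ref{lemma:coroll} to obtain a mapping $\pi: C^* \to C(\Delta)$ with $d(c^*, \pi(c^*)) \le (2+\eps)\Delta$ such that each pivot $c$ retains in $I(\Delta)(c)$ a representative from every demographic group appearing in $\pi^{-1}(c)$. For each $c \in C$ (the $(6+2\eps)\Delta$-separated set produced by MergeCenters), let $c^*(c)$ be an optimal center covering $c$, so $d(c, c^*(c)) \le \OPT \le \Delta$. Then $p(c) := \pi(c^*(c))$ satisfies $d(c, p(c)) \le (3+\eps)\Delta$, which is strictly below $d' = D/2 > (3+\eps)\Delta$ (since $D > (6+2\eps)\Delta$); hence $I(\Delta)(p(c)) \subseteq I'_c$, and Corollary \ref{lemma:coroll} supplies a replacement in $I(\Delta)(p(c))$ from the demographic group of $c^*(c)$. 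I would then argue that $c \mapsto c^*(c)$ is injective, since a collision would force $d(c_1, c_2) \le 2\OPT \le 2\Delta$, contradicting the separation of $C$. The resulting witness $g$ satisfies the upper-bound constraint because at most $|C^* \cap S_i| \le u_i$ replacements land in each group $i$, and it satisfies the lower-bound slack inequality $\sum_i \max(0, l_i - |g(C) \cap S_i|) \le k - |C|$ because $l_i - |g(C) \cap S_i| \le |C^* \cap S_i| - |g(C) \cap S_i|$ and the $c^*$'s unused by $c^*(\cdot)$ total exactly $k - |C|$. Lemma \ref{lem:fairshift_correctness} then ensures MergeCenters returns a non-empty $C_m$.

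For the distance bound, I would fix an arbitrary input point $e$ and walk through three hops: Corollary \ref{lemma:coroll} yields a pivot $f(e) \in C(\Delta)$ with $d(e, f(e)) \le (2+\eps)\Delta$; the while loop of MergeCenters guarantees some $c \in C$ with $d(f(e), c) \le (6+2\eps)\Delta$ (since it terminates only when every pivot is within $(6+2\eps)\Delta$ of $C$); and the replacement $r_c \in C_m$ lies in $I(\Delta)(c')$ for some $c'$ with $d(c, c') \le (3+\eps)\Delta$ and $d(c', r_c) \le (2+\eps)\Delta$, so $d(c, r_c) \le (5+2\eps)\Delta$. Summing the three hops gives $d(e, C_m) \le (13+5\eps)\Delta \le (13+5\eps)(1+\eps)\OPT = (13+O(\eps))\OPT$. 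Augmenting $C_m$ with points from $C_n$ can only shrink this distance, and is feasible because $|C_n \cap S_i| = u_i \ge l_i$ whenever $|S_i| \ge u_i$, so the final output satisfies range-based fairness.

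The main obstacle I anticipate is confirming that the three constants $(6+2\eps)$, $(3+\eps)$, and $(2+\eps)$ in the MergeCenters thresholds compose correctly: separation of $C$ at $(6+2\eps)\Delta$ is exactly what is needed for $c^*(\cdot)$ to be injective, the $(3+\eps)\Delta$ radius defining $I'_c$ is exactly what is needed to contain $\pi(c^*(c))$, and $D/2 > (3+\eps)\Delta$ is exactly what lets the ``pretended'' edge appear in $G'_A$. Threading these constants together, and pushing through the lower-bound accounting via the unused portion of the optimal solution, is the only nontrivial step; everything else reduces to mechanical applications of Corollary \ref{lemma:coroll} and Lemma \ref{lem:fairshift_correctness}.
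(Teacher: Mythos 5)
Your proof is correct and takes essentially the same route as the paper's: construct a fair-shift witness from an optimal solution via Corollary~\ref{lemma:coroll} together with the disjointness of the sets $I'_c$, then chain the three hops $(2+\eps)\Delta + (6+2\eps)\Delta + (5+2\eps)\Delta \le (13+5\eps)(1+\eps)\OPT$. You are in fact somewhat more explicit than the paper in verifying the upper-bound and lower-bound-slack constraints of the witness (via injectivity of $c\mapsto c^*(c)$ and counting the unused optimal centers), details the paper leaves implicit when it asserts the fair-shift solution exists.
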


\begin{proof}
Note that the set of pivots $C_o = C(\Delta)$ is passed to algorithm \ref{alg:merge} at the end of the stream. %We first show for all points $c_o \in C_o$,  $d(C_m, c_o) \leq 11(1+\eps)\Delta$. After line \ref{line:afterfilter}, we show that there exists a set $A$ formed by selecting at most one point from each set $I'_c$ such that we can add points from $C_n$ to $A$ so that the set satisfies fairness, and we can use the algorithm \ref{alg:newFairShiftTest} to find such a set.

By corollary \ref{lemma:coroll}, there is a mapping $f:S\to C(\Delta)=C_o$ such that for all $o^*\in \OPT$, we have $d(o^*, f(o^*)) \leq (2+\eps) \Delta$. Furthermore, for every pivot $c\in C_o$, we have $|I(\Delta)(c)\cap S_i|\ge \min(|f^{-1}(c) \cap S_i|, 1)~\forall i$. Therefore, there is a mapping $g:\OPT\to \cup_c I(\Delta)(c)$ such that for every $o^*\in \OPT$, we have $g(o^*)\in I(\Delta)(f(o^*))$ and $o^*$ and $g(o^*)$ are in the same demographic group. Note that multiple points in $\OPT$ could be mapped by $g$ to the same point. By triangle inequality, $d(o^*, g(o^*)) \le d(o^*, f(o^*))+d(f(o^*), g(o^*)) \le (4+2\eps)\Delta$.

For every pivot $c_o \in C_o$, there is a center $c_o^* \in \OPT$ such that $d(c_o, c^*_o) \leq \OPT \leq \Delta$. There are two possibilities for $f(c_o^*)$: either $f(c_o^*)=c_o$ or $f(c_o^*)=c_o'$ where $d(c_o, c_o') \le d(c_o, c_o^*) + d(c_o^*, f(c_o^*)) \le (3+\eps)\Delta$.
%If $I_{c_o}$ covers $c_o^*$, then we're done. Otherwise, there are two cases when $I_{c_o}$ does not cover $c_o^*$. The first case is when $I_{c_o}$ already has enough of $c_o^*$'s demographic group and does not add it into the set ($(f(c^*_o) = c_o$). In this case, we can get a replacement $c_o^r$ for $c_o^*$ in $I_{c_o}$ such that they both share the same demographic by lemma \ref{lemma:approxloss}. In the second case, when $f(c^*_o) \neq c_o$  and $(f(c^*_o)$ is set to another pivot $c_o'$ instead, we know $c_o'$ has its associated set $I_{c_o'}$ takes either $c^*_o$ or its replacement $c_o^r$. Now, our job is to merge the sets $I_{c_o}$ and $I_{c_o'}$ into a set $I'_{c_o}$ so that $c_o$ can be swapped to $c^*_o$ or its replacement $c^r_o$. We will show that this is done by the construction of the set $C$ in algorithm \ref{alg:merge}.  From this point forward, we use $c^r_o$ interchangeably with $c^*_o$ in the analysis. 

By the logic of the algorithm, we know that every point in $C$ is at distance more than $(6+2\eps)\Delta$ from the others. For each new center $c\in C$, its new associated set $I'_c$ is the union of all $I(\Delta)(c')$ such that $d(c,c') \le (3+\eps)\Delta$. Observe that 1) each set in $I(\Delta)$ goes to at most one new set $I'_c$ and 2) if $c^*$ is the closest center in $\OPT$ to $c$ then $g(c^*)\in I'_c$.

By the above observation, there exists a fair-shift solution for the set of centers $C$ using disjoint replacement candidate sets $I'_c$ (replacing each $c$ with $g(c^*)$) and Algorithm~\ref{alg:newFairShiftTest} can find a solution. Let $h$ be the function that maps from $c\in C$ to its replacement $h(c)\in C_m$. The final solution is formed by taking the output $C_m$ of Algorithm~\ref{alg:newFairShiftTest} and adding arbitrary centers from $C_n$ to fulfill the fairness constraint. 

Next we bound the approximation factor. By construction, for each $c_o \in C_o$ there exists $c\in C$ such that $d(c_o, c) \le (6+2\eps)\Delta$. We have
\begin{align*}
    d(c_o, C_m) \le d(c_o, h(c)) \le d(c_o, c) + d(c, h(c)) \\
    \le (6+2\eps)\Delta + (3+\eps)\Delta + (2+\eps)\Delta= (11+4\eps)\Delta
\end{align*} 
By corollary \ref{lemma:coroll}, for any $p\in S$, we have $d(p, f(p))\le (2+\eps)\Delta$. By triangle inequality and the fact that $\Delta \leq (1+\eps)\OPT$, we have:
\begin{align*}
    d(p, C_m) & \leq d(p,f(p)) + d(f(p), C_m)  \leq (13+O(\eps)) \OPT.
\end{align*}
\end{proof}

\begin{theorem}
\label{theorem:streaming}
There is a $(13+O(\eps))$-approximation one-pass streaming algorithm for range-based fair $k$-centers that stores at most $O\left(km  \log(1/\eps) /\eps \right)$ points with runtime $O(nk\log(1/\eps)/\eps)$. 
\end{theorem}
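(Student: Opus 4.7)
The plan is to assemble the theorem from Lemma~\ref{lem:memoryruntime} (for the space and running time) together with a case split on $\OPT$ relative to the guess grid (for the approximation factor). The space and runtime bounds are essentially what Lemma~\ref{lem:memoryruntime} already states; the $\sum_i u_i$ contribution of $C_n$ folds into the $km$ term stated in the theorem in the parameter regimes of interest, so no new work is needed beyond citing that lemma.

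The approximation argument splits on whether $\OPT \le \tau_{\max}$ or not. By Lemmas~\ref{lem:2opt} and~\ref{lem:increase}, the final lower bound $\tau$ satisfies $\tau \le \OPT$, and by construction $\tau_{\min}$ is the smallest power of $(1+\eps)$ that is at least $\tau$. In \emph{Case A} ($\OPT \le \tau_{\max}$), the grid $G$ of powers of $(1+\eps)$ between $\tau_{\min}$ and $\tau_{\max}$ necessarily contains a value $\Delta^* \in [\OPT, (1+\eps)\OPT]$ (or $\tau_{\min}$ itself already lies in that interval when $\OPT \le \tau_{\min}$). Lemma~\ref{lem:worstcase} then guarantees that MergeCenters succeeds on $\mathbb{I}(\Delta^*)$ and produces a fair center set of cost $(13+O(\eps))\OPT$. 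Because Algorithm~\ref{alg:1pass} iterates $G$ in increasing order of $\Delta$ and returns at the first successful guess $\Delta \le \Delta^*$, the geometric cover-radius chain from the proof of Lemma~\ref{lem:worstcase} --- which scales with the guess being used, not with $\OPT$ --- still bounds the cost by $(13+O(\eps))\Delta \le (13+O(\eps))\Delta^* \le (13+O(\eps))\OPT$.

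In \emph{Case B} ($\OPT > \tau_{\max} = \tau_{\min}(2+\eps)/\eps$), every grid guess is small relative to $\OPT$, so $\tau_{\min} = O(\eps\cdot\OPT)$. Then every stream point sits within $2\tau_{\min} = O(\eps\cdot\OPT)$ of some pivot in $C(\tau_{\min})$, and Corollary~\ref{lemma:coroll} together with the $C_n$ bookkeeping in Algorithm~\ref{alg:instance} shows that the stored set $I(\tau_{\min})\cup C(\tau_{\min})\cup C_n(\tau_{\min})$ contains, for each demographic group, enough points to realize any fair solution up to an $O(\eps\cdot\OPT)$ geometric perturbation. Invoking the offline algorithm of Theorem~\ref{thm:gen_kcenters_result} at line~\ref{line:offlinecall} on this stored set then yields a $(3+O(\eps))\OPT$-approximate fair $k$-center set, comfortably within the claimed $(13+O(\eps))\OPT$ bound.

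The main technical obstacle lies in Case B: one must verify that $C_n(\tau_{\min})$ carries enough representatives per group for the offline subroutine to satisfy the lower bounds $l_i$, and that Corollary~\ref{lemma:coroll} provides a demographic-preserving mapping from every stream point to the stored pivots and their attached representatives, so that the fair optimum restricted to the stored subset is at most $\OPT + O(\eps\cdot\OPT)$ by the triangle inequality. Once these two invariants are verified, Theorem~\ref{thm:gen_kcenters_result} closes Case B, and combining with Case A and Lemma~\ref{lem:memoryruntime} completes the proof.
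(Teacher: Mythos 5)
Your proposal is correct and follows essentially the same route as the paper's proof: the same case split on whether a guess lands in $[\OPT,(1+\eps)\OPT]$, invoking Lemma~\ref{lem:worstcase} and Lemma~\ref{lem:memoryruntime} in the first case, and falling back to the offline algorithm of Theorem~\ref{thm:gen_kcenters_result} on the stored set with a triangle-inequality perturbation of $O(\eps\cdot\OPT)$ via Corollary~\ref{lemma:coroll} in the second. Your added remark that an earlier successful guess $\Delta\le\Delta^*$ still yields the bound because the cover-radius chain scales with $\Delta$ is a small point the paper leaves implicit, but it does not change the argument.
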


\begin{proof}
Note that by lemma \ref{lem:2opt} and the fact that $\tau_{\min} \leq (1+\eps) \tau_r$, we have $\tau_{\min} \leq (1+\eps) \OPT$. If  there exist $p \in [\beta]$ such that $\tau_{\min} (1+\eps)^p \geq \OPT$, then the claim follows by lemmas \ref{lem:worstcase} and \ref{lem:memoryruntime}. 

If all the guesses return failure, it must be the case that $\tau_{\max}=(2+\eps)\tau_{\min}/\eps \le OPT$. At line \ref{line:offlinecall} of algorithm \ref{alg:1pass}, we run an offline fair $k$-center on the set $I=\cup_{c\in C(\tau_{\min})}I(\tau_{\min})(c) \cup C_n(\tau_{\min}) \cup C(\tau_{\min})$. The running time is $O(nk)$ since $I$ only contains a subset of all the points in the stream. We will show that the optimal fair $k$-center cost for input $I$ is $\OPT + \tau_{\min} (4+2\eps)$. This is because we can first select the optimal solution in the original problem and move its centers to points in $I$ in the same demographic groups as follows. Consider a point $c^*\in \OPT$. By Corollary \ref{lemma:coroll}, there exists some center $c \in C(\tau_{\min})$ such that $d(c^*,c) \leq (2+\eps)\tau_{\min}$ and furthermore, there exists $c'=c'(c^*)\in I(\tau_{\min})(c)$ such that $d(c', c) \leq (2+\eps) \tau_{\min}$ and $c'$ and $c^*$ are from the same demographic group. By triangle inequality, $d(c^*, c') \le  (4+2\eps)\tau_{\min}$.
The solution $\{c'(c^*)\ |\ c^*\in \OPT\}$ could have size smaller than $k$ when multiple points in $\OPT$ map to the same point in $I$ but it can be extended to a fair solution $S$ by adding points from $C_n(\tau_{\min})$ appropriately to satisfy the fairness lower bounds.

Because the offline algorithm gives a $3$-approxmation solution by theorem \ref{thm:gen_kcenters_result}, we have a fair solution $S$ with cost at most $(3 + 3\epsilon \frac{4+2\epsilon}{2+\epsilon}) \OPT = (3+6\eps)\OPT$. By Corollary~\ref{lemma:coroll}, for any point $p\in S$, there exists $c\in C(\tau_{\min})$ such that $d(c,p) \leq (2+\eps) \tau_{\min}$. Therefore,
\begin{align*}
    d(p,S) &\le d(c,S) +  d(p,c) \le (3+6\eps) \OPT + (2+\eps) \tau_{\min} \\
    &\le (3+7\eps)\OPT 
\end{align*}
\end{proof}

When the centers must contain an exact number from each demographic group, we have a special case of the range-based fairness $k$-centers problem where $u_i = l_i$ and $\sum_{i=1}^m u_i = k$. 
\begin{corollary}
\label{theorem:equality}
There is a $(13+\eps)$-approximation one-pass streaming algorithm for equality-based  fair $k$-centers   that stores  $O\left(km  \log(1/\eps) /\eps \right)$ points with running time $O(nk\log(1/\eps)/\eps)$. 
\end{corollary}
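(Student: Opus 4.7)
The plan is to derive this corollary as an immediate specialization of Theorem~\ref{theorem:streaming}. Equality-based fairness is exactly the case $l_i = u_i$ for all $i$ with $\sum_{i=1}^m u_i = k$, so it is a (degenerate) instance of the range-based fairness constraint studied in Section~4. Thus Algorithm~\ref{alg:1pass} can be run verbatim on any equality-based instance: every range-based feasible set of size $k$ satisfying $l_i \le |C \cap S_i| \le u_i$ with $l_i = u_i$ automatically has $|C \cap S_i| = u_i$, hence is equality-fair. Correctness and the $(13+O(\eps))$ approximation follow directly from Theorem~\ref{theorem:streaming}.

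Next I would verify the two quantitative claims. For the approximation factor, Theorem~\ref{theorem:streaming} gives $(13 + O(\eps))$; rescaling the guess grid parameter by an appropriate constant (i.e.\ running Algorithm~\ref{alg:1pass} with $\eps' = \eps / K$ for the constant $K$ hidden in the $O(\cdot)$) yields a $(13 + \eps)$-approximation without changing the asymptotic space or runtime bounds. For the space bound, Lemma~\ref{lem:memoryruntime} gives $O\bigl((km + \sum_i u_i)\log(1/\eps)/\eps\bigr)$ points stored; in the equality case $\sum_i u_i = k$, which is dominated by $km$ (since $m \ge 1$), so the bound collapses to $O(km \log(1/\eps)/\eps)$. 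The runtime bound of $O(nk \log(1/\eps)/\eps)$ carries over unchanged from Lemma~\ref{lem:memoryruntime}.

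The only step requiring any argument beyond substitution is checking that the final output is feasible for the equality constraint and not merely for the (strictly weaker) range constraint. This is immediate: Algorithm~\ref{alg:1pass} returns a set $C_m$ of size exactly $k$ satisfying $l_i \le |C_m \cap S_i| \le u_i$ for each $i$; when $l_i = u_i$ these inequalities force $|C_m \cap S_i| = u_i$. Hence the output is equality-fair, completing the proof. There is no real obstacle here—the corollary is essentially bookkeeping on top of Theorem~\ref{theorem:streaming}.
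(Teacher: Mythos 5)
Your proposal is correct and matches the paper's treatment: the paper gives no separate proof for this corollary, simply observing that equality-based fairness is the special case $l_i = u_i$ with $\sum_i u_i = k$ and letting Theorem~\ref{theorem:streaming} and Lemma~\ref{lem:memoryruntime} do the work, exactly as you do. Your added bookkeeping (the space bound collapsing because $\sum_i u_i = k \le km$, the $\eps$-rescaling to absorb the hidden constant, and the check that the range constraint degenerates to equality) is all sound and only makes explicit what the paper leaves implicit.
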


% === EXPERIMENTS === 

\section{Experiments}

\subsection{Set up}
For synthetic datasets, we generate twenty 4-dimensional Gaussian isotropic blobs with identity covariance matrix. We assign 5000 data points for each blob and thus 100000 points for the entire dataset.  Each blob's center is randomly initialized within a bounding box with edges of length 20. To create $m$ group assignments, we  generate   $\log_2 m$  random hyperplanes. Given a point $x$, we have a total of $2^{\log_2 m}$ possibilities of whether it lies above or below each hyperplane. Each corresponds to a group assignment. For real datasets, we use the Bank, Compas, and Adult datasets from the UCI  Repository \cite{Dua:2019}. The labels “deposit”, “sex”, and “race” are utilized to create group assignment respectively. We also normalize datasets and use numeric features for clustering. 

For all datasets, total number of centers $k$ is set as $5\%$ of total number of points. We are interested in a fair clustering where the number of points selected from each group is approximately proportional to the size of that group. As a result, we set $l_i = (1-\eps) \frac{\left\lvert S_i\right\rvert }{n} k,  \quad  u_i = (1+\eps) \frac{\left\lvert S_i\right\rvert}{n} k,$
where $\eps$ is a small value.

Previous approaches take in the required exact numbers of centers $m_i$ from each group as input and is only be applicable for perfectly fair clusterings where $\eps = 0$. Instead of setting $m_i = \frac{\left\lvert S_i\right\rvert}{n} k $, one could choose the values of $m_i$ in  $\left[ l_i, u_i\right]$ such that $\sum m_j = k$. We propose two different heuristics for choosing the values of $m_i$. Both approaches start by initializing $m_i = l_i$ then loops through each group and set $ m_i =  u_i$ if $ u_i - m_i \leq k - \sum m_i$, else $m_i = m_i + k - \sum m_j$.
%  $m_i = u_i$ if   $ u_i - m_i \leq k - \sum m_i$, and $m_i = m_i + k - \sum m_i$ otherwise. 
% \[ m_i = \begin{cases} u_i & \text{if } u_i - m_i \leq k - \sum m_i, \\
% m_i + k - \sum m_j & \text{otherwise.}
% \end{cases}
% \]
Intuitively, we set $m_i$ as the largest value possible if we have enough points left, $ u_i - m_i \leq k - \sum m_i$. Otherwise, we allocate  the remaining number of centers to $m_i$. The first (\textbf{Major}) and second (\textbf{Minor}) heuristics traverse the groups in  decreasing and increasing order of their size respectively. As $\eps$ increases,  one assigns more centers to groups that make up majority of the dataset, the other assigns more centers to minority groups.
%\textbf{Major} is motivated from the fact that large groups tend to spread out and could benefit from more centers.  \textbf{Major} is based from the fact that 
% \begin{table}[ht]
% \centering
% \small
% \tabcolsep=0.07cm

% \begin{tabular}{lcccc}
% \hline
% Algorithms & 50 Groups    & 100 Groups  & 200 Groups   & 400 Groups   \\  \hline
% Range $k$-centers   & \textbf{1.12 (0.04)}   & \textbf{1.44 (0.08)} & \textbf{2.05 (0.1)}   &\textbf{ 2.78 (0.21)}  \\
% Heuristic D  & 1.44 (0.09)  & 1.91 (0.14) & 2.48 (0.17)  & 3.5 (0.15)  \\
% Heuristic E & 1.55 (0.05)  & 1.84 (0.07) & 2.5 (0.12)  & 3.07 (0.16)  \\
%  \hline
% \end{tabular}
%  \caption{Mean and standard deviation of objective value on simulated data}
%  \label{table:synthPerformance}
% \end{table}

% \begin{figure}[htb]
%     \centering
%     \includegraphics[width=0.4\textwidth]{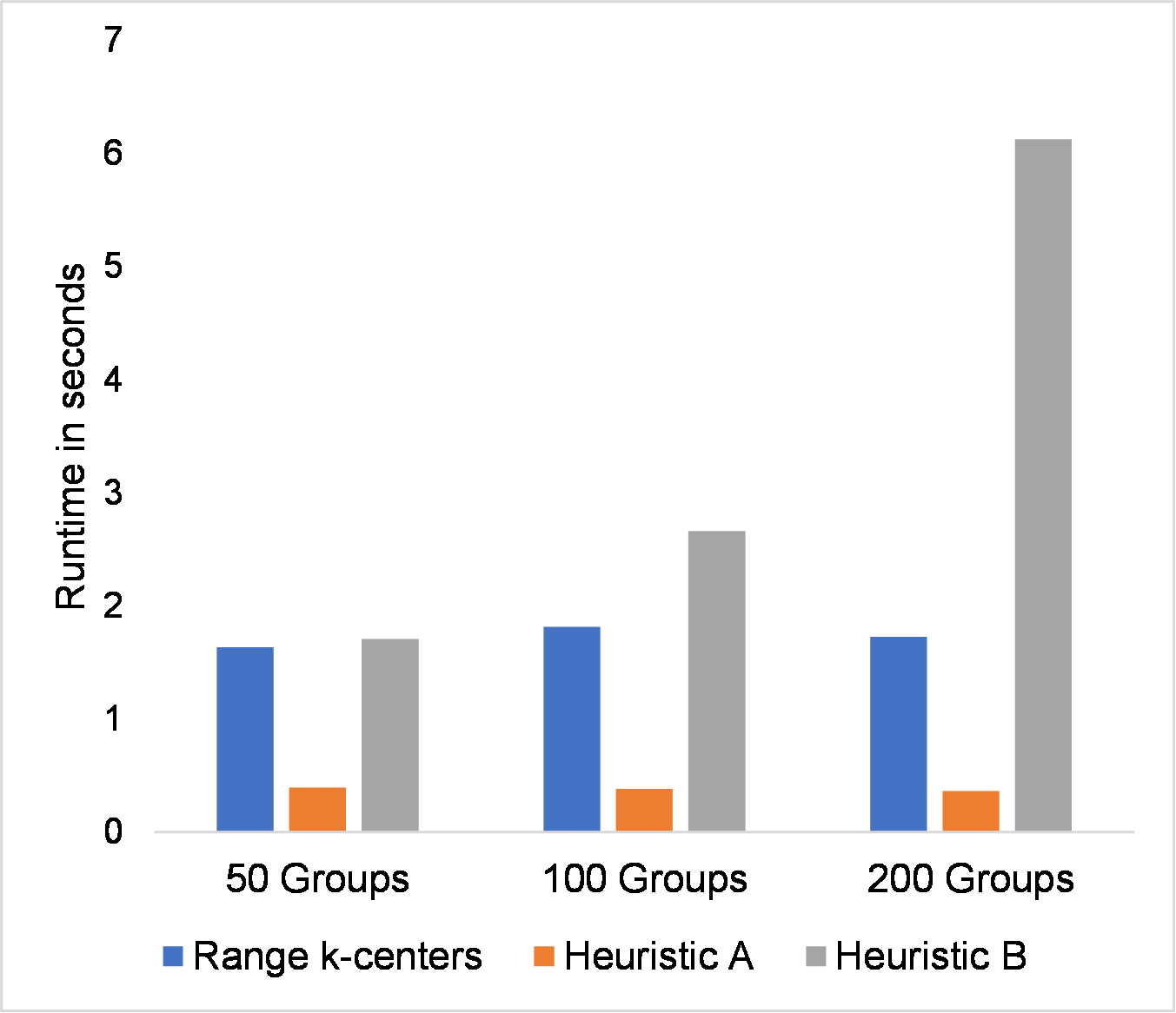}
%     \caption{Mean runtime in seconds on simulated data}
%     \label{fig:runtime_synthetic}
% \end{figure}
For synthetic dataset, we have $m \in \{2,4,8\}$. For each dataset, we vary $\eps \in \{.1,.2,.3,.4\}$, report the mean objective value for each setting of 20 randomized runs in tables~\ref{table:synthPerformance} and~\ref{table:realPerformance}, and the corresponding standard deviations in tables ~\ref{table:synthPerformance_std} and~\ref{table:realPerformance_std}.
To complement our experiments, we also report mean objective values of the approaches in \cite{nguyen2020fair, kleindessner2019guarantees} with $\eps = 0$ in table~\ref{table:fairPerformance} and the mean runtime of the methods in figure~\ref{fig:runtime_real1}. 
\subsection{Results \& Discussion}
\textbf{Heuristic approaches are brittle and sensitive to values of} \textbf{$m_i$}.
In both settings, larger $\eps$ tends to worsen performance for the heuristic approaches. Modifying $m_i$'s is inadequate and ineffective to control the fairness-utility tradeoff.

\textbf{Our approach has superior performance for relatively small value of $\eps$}. As expected, for very small value of $\eps = .1$,  all the approaches have similar value compared to  the perfectly fair clusterings of \cite{nguyen2020fair} and \cite{ kleindessner2019guarantees}. For $\eps \geq .2$, our approach shows superior performance to the others. When $\eps = .2$, the gain in objective value is at least $10\%$ in both settings. As $\eps$ increases, the gain grows larger as expected and is up to at least $19\%$.     

\textbf{The runtime of our approach is comparable to the heuristic methods}. In figure~\ref{fig:runtime_real1}, we plot mean runtime of our approach to \cite{nguyen2020fair} and \cite{kleindessner2019guarantees} with the Minor heuristic. We omitted reporting the runtime of the Major heuristic methods since they have minimal difference with the Minor heuristics methods. The plot shows our algorithm is almost as fast as the heuristic methods.

% From table \ref{table:realPerformance} and figure \ref{fig:runtime_real1}, and \ref{fig:runtime_real2}, it's clear that our algorithm's performance gain is small but noticeable (around 6\%) across all datasets while running time is worse than Heuristic D.  

% \begin{table*}[htb]
% \centering
% \begin{tabular}{lccccc}
% \hline 
% Algorithms             & A-Gender & A-Race  & S-Sex & S-School & S-Address  \\ \hline
% Range $k$-centers  &\textbf{ 0.015 (0.00)}                &\textbf{ 0.015 (0.00)}  & \textbf{0.65 (0.001)} & \textbf{0.65 (0.002)} & \textbf{0.66 (0.002) }     \\
% Heuristic D             & 0.016 (0.00)  & 0.016(0.00) & 0.66 (0.006) & 0.7 (0.01)  & 0.7(0.01)     \\
% Heuristic E & 0.23 (0.001)  & 0.24 (0.002) & 0.7 (0.02) & 0.7 (0.03)    & 0.7(0.03)       \\
%  \hline
% \end{tabular}
%  \caption{Mean and standard deviation of objective value on real data}
%  \label{table:realPerformance}
% \end{table*}

\setlength{\tabcolsep}{4.25pt}
\begin{table*}[]

\footnotesize
\begin{tabular}{l|cccc|cccc|cccc}
\multicolumn{1}{c|}{\multirow{2}{*}{Algorithms }} & \multicolumn{4}{|c|}{\textbf{2 Groups}}                           & \multicolumn{4}{|c|}{\textbf{4 Groups}}                             & \multicolumn{4}{|c}{\textbf{8 Groups}}                            \\ 
\multicolumn{1}{c|}{}                                 & .1            & .2            & .3            & .4            & .1            & .2            & .3            & .4            & .1            & .2            & .3            & .4            \\ \hline
\textbf{Ours}                                         & 1.16                   & \textbf{0.93}          & \textbf{0.93}          & \textbf{0.94}          & \textbf{1.22}          & \textbf{1.11}          & \textbf{1.11}          & \textbf{1.11}          & \textbf{1.77}          & \textbf{1.44}          & \textbf{1.44}          & \textbf{1.44}          \\
(Jones et al.)-Minor                                     & 1.18                   & 1.22                   & 1.25                   & 1.33                   & 1.41                   & 1.49                   & 1.5                    & 1.58                   & 1.8                    & 1.86                   & 2.02                   & 2.14                   \\
(Jones et al.)-Major                                     & 1.2                    & 1.24                   & 1.25                   & 1.31                   & 1.42                   & 1.45                   & 1.53                   & 1.61                   & 1.83                   & 1.82                   & 1.95                   & 2.13                   \\
(Kleindessner et al.)-Minor                                     & \textbf{1.11}          & 1.14                   & 1.13                   & 1.2                    & 1.46                   & 1.5                    & 1.55                   & 1.62                   & 1.94                   & 2.02                   & 2.13                   & 2.15                   \\

(Kleindessner et al.)-Major                                     & 1.11                   & 1.13                   & 1.14                   & 1.2                    & 1.47                   & 1.5                    & 1.56                   & 1.61                   & 1.98                   & 2.08                   & 2.08                   & 2.17   
\end{tabular}
 \caption{Mean objective values on simulated data with varied $\eps$.}
 \label{table:synthPerformance}
\end{table*}

\begin{table*}[]
\footnotesize
\begin{tabular}{l|cccc|cccc|cccc}
\multicolumn{1}{c|}{\multirow{2}{*}{Algorithms}} & \multicolumn{4}{|c|}{\textbf{Compas}}                           & \multicolumn{4}{|c|}{\textbf{Bank}}                             & \multicolumn{4}{|c}{\textbf{Adult}}                            \\ 
\multicolumn{1}{c|}{}                                 & .1            & .2            & .3            & .4            & .1            & .2            & .3            & .4            & .1            & .2            & .3            & .4            \\ \hline
\textbf{Ours}                                        & .103          & \textbf{.091} & \textbf{.091} & \textbf{.088} & \textbf{.108} & \textbf{.107} & \textbf{.104} & \textbf{.105} & \textbf{.111} & \textbf{.108} & \textbf{.107} & \textbf{.107} \\
(Jones et al.)-Minor                                    & .105          & .115          & .125          & .130          & .114          & .120          & .126          & .133          & .130          & .133          & .141          & .147          \\
(Jones et al.)-Major                                    & .111          & .123          & .110          & .126          & .112          & .118          & .128          & .133          & .128          & .132          & .141          & .148          \\
(Kleindessner et al.)-Minor                                    & \textbf{.102} & .105          & .105          & .107          & .119          & .120          & .120          & .118          & .140          & .143          & .149          & .149          \\
(Kleindessner et al.)-Major                                    & .106          & .106          & .120          & .102          & .120          & .118          & .117          & .119          & .140          & .146          & .146          & .148         
\end{tabular}
 \caption{Mean objective values on real data with varied $\eps$.}
 \label{table:realPerformance}
\end{table*}
\begin{figure}[htb]
    \centering
    \includegraphics[width=0.5\textwidth]{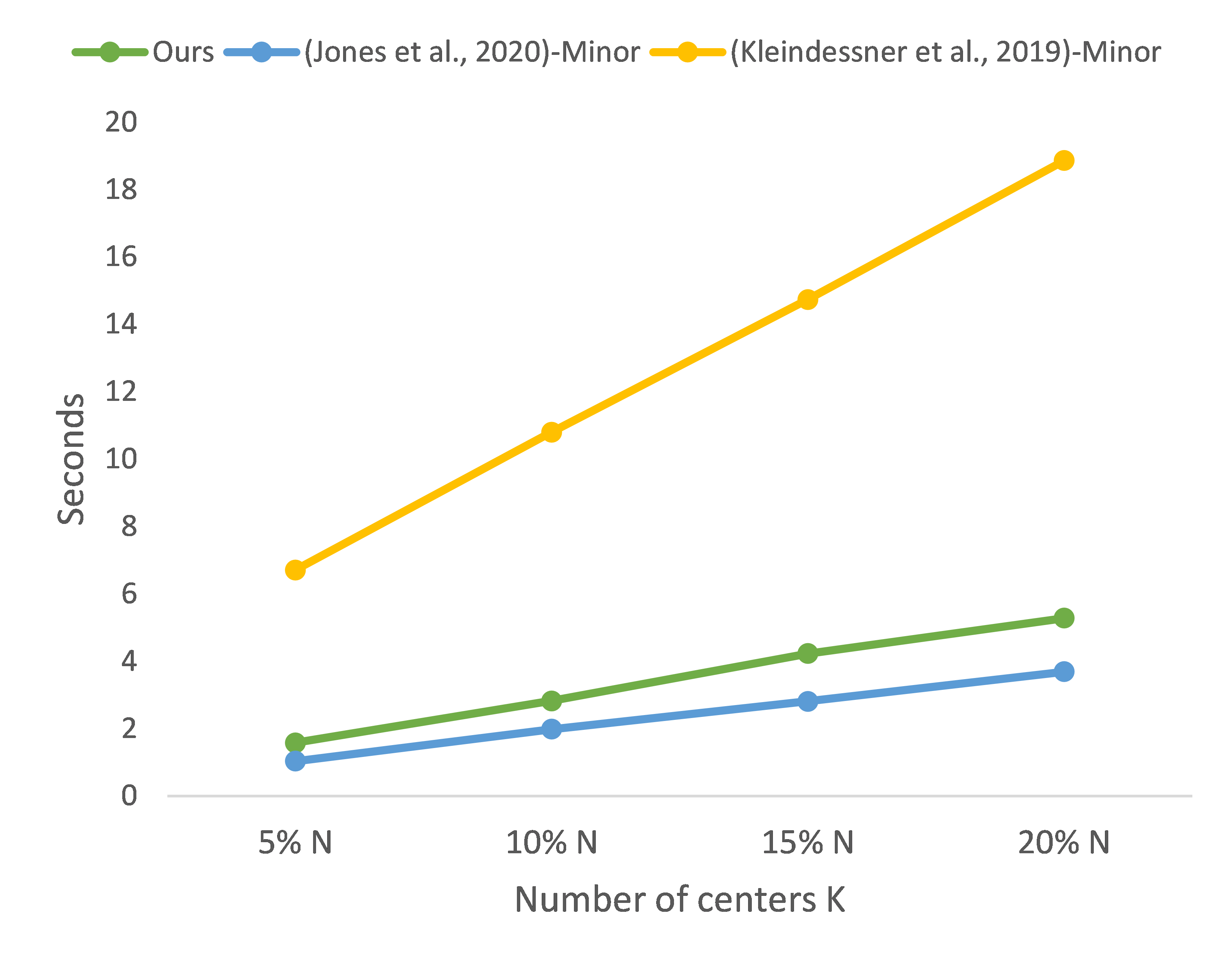}
    \caption{Mean runtime in seconds on Adult dataset.}
    \label{fig:runtime_real1}
\end{figure}
\begin{table}[]
\centering
\small
\begin{tabular}{l|ll}

Dataset  & (Jones et al.)  & (Kleindessner et al.)  \\ \hline
2 Groups  & 1.22 (.027)          & 1.24 (.100)         \\
4 Groups  & 1.4 (.039)           & 1.48 (.041)         \\
8 Groups  & 1.72 (.084) & 1.84 (.119)         \\
Compas   & .106 (.004)         & .017 (.011)       \\
Bank     & .113 (.004)         & .117 (.004)        \\
Adult    & .118 (.004)         & .121 (.009)      
\end{tabular}
 \caption{Means (and standard deviations) objective values and on synthetic and real data with $\eps = 0$.}
 \label{table:fairPerformance}
\end{table}

% \begin{figure}[htb]
%     \centering
%     \includegraphics[width=0.4\textwidth]{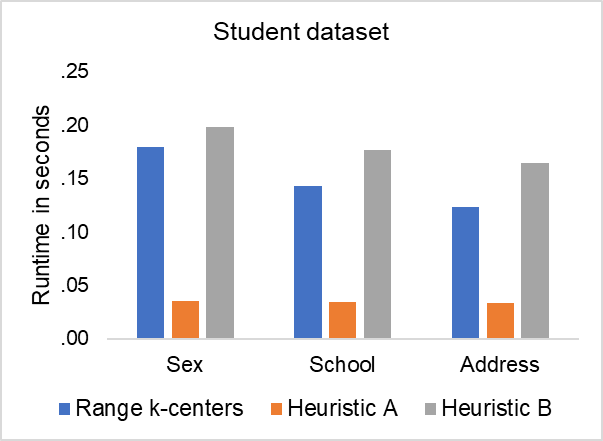}
%     \caption{Mean runtime in seconds on student dataset}
%     \label{fig:runtime_real2}
% \end{figure}

\begin{table*}[htb!]

\footnotesize
\begin{tabular}{l|cccc|cccc|cccc}
\multicolumn{1}{c|}{\multirow{2}{*}{Algorithms }} & \multicolumn{4}{|c|}{\textbf{2 Groups}}                           & \multicolumn{4}{|c|}{\textbf{4 Groups}}                             & \multicolumn{4}{|c}{\textbf{8 Groups}}                            \\ 
\multicolumn{1}{c|}{}                                 & .1            & .2            & .3            & .4            & .1            & .2            & .3            & .4            & .1            & .2            & .3            & .4            \\ \hline
Ours & .044 & .001 & .001 & .001 & .115 & .002 & .001 & .002 & .058 & .002 & .002 & .002 \\
(Jones et al.)-Minor & .042 & .055 & .030 & .037 & .032 & .054 & .034 & .053 & .100 & .085 & .071 & .119 \\
(Jones et al.)-Major &  .055 & .050 & .035 & .016 & .078 & .059 & .056 & .032 & .122 & .110 & .106 & .141 \\
(Kleindessner et al.)-Minor &  .037 & .053 & .033 & .036 & .031 & .041 & .077 & .045 & .118 & .097 & .127 & .073 \\
(Kleindessner et al.)-Major &  .028 & .069 & .065 & .054 & .070 & .063 & .068 & .062 & .128 & .125 & .122 & .078
\end{tabular}
 \caption{Standard deviations of objective values on simulated data with varied $\eps$.}
 \label{table:synthPerformance_std}
\end{table*}

\begin{table*}[htb!]

\footnotesize
\begin{tabular}{l|cccc|cccc|cccc}
\multicolumn{1}{c|}{\multirow{2}{*}{Algorithms}} & \multicolumn{4}{|c|}{\textbf{Compas}}                           & \multicolumn{4}{|c|}{\textbf{Bank}}                             & \multicolumn{4}{|c}{\textbf{Adult}}                                                    \\ 
\multicolumn{1}{c|}{}                                 & .1            & .2            & .3            & .4            & .1            & .2            & .3            & .4            & .1            & .2            & .3            & .4            \\ \hline
Ours &  .694 & .718  & .256  & .385 & .413 & .435 & .588 & .469 & .408 & .038 & .035 & .029 \\
(Jones et al.)-Minor &  .422 & 1.002 & .987  & .920 & .359 & .502 & .691 & .521 & .468 &  .590 & .723 & .763 \\
(Jones et al.)-Major &  .596 & .552  & .836  & .736 & .542 & .557 & .576 & .626 & .710 & .552 & .803 & .478 \\
(Kleindessner et al.)-Minor &  .400 & 1.101 & 1.122 & .965 & .487 & .398 & .823 & .711 & .516 & .617 & .499 & .607 \\
(Kleindessner et al.)-Major & .734 & .167  & .681  & .701 & .558 & .469 & .663 & .605 & .658 & .714 & .953 & .599
\end{tabular}
 \caption{Standard deviations (scaled up by $100$) of objective values  on real data with varied $\eps$.}
 \label{table:realPerformance_std}
\end{table*}

\bibliography{ms}

%\newpage

%\section*{Appendix}
%\input{arxiv/appendix}
%\input{checklist}
\end{document}